%% file: SBpaper.tex
\documentclass[submission,copyright,creativecommons]{eptcs}
\providecommand{\event}{SOS 2007} 

\usepackage{iftex}
\usepackage{mathtools}
\usepackage{proof}

\ifpdf
  \usepackage{underscore}         
  \usepackage[T1]{fontenc}        
\else
  \usepackage{breakurl}           
\fi
\usepackage{amssymb}
\usepackage{amsmath}
\usepackage{amsthm}
\usepackage{stmaryrd}
\theoremstyle{definition}
\newtheorem{definition}{Definition}[section]
\newtheorem{theorem}[definition]{Theorem}
\newtheorem{lemma}[definition]{Lemma}

\newtheorem{example}[definition]{Example}

\input{prooftree}

\newcommand{\weg}[1]{}

\def\phi{\varphi}
\def\squareforqed{\hbox{\rlap{$\sqcap$}$\sqcup$}}
\def\qed{\ifmmode\squareforqed\else{\unskip\nobreak\hfil
\penalty50\hskip1em\null\nobreak\hfil\squareforqed
\parfillskip=0pt\finalhyphendemerits=0\endgraf}\fi}

\newcommand{\inty}[1]{\llparenthesis{#1}\rrparenthesis}
\newcommand{\intyr}[1]{\inty{#1}_{\rho}}
\newcommand{\intyrc}[2]{\inty{#1}_{#2}}

\newcommand{\inte}[1]{\llbracket{#1}\rrbracket}
\newcommand{\intexr}[1]{\inte{#1}_{\xi\rho}}
\newcommand{\intexrc}[2]{\inte{#1}_{#2}}
\newcommand{\inteAr}[1]{\inte{#1}^{\cA}_{\rho}}
\newcommand{\inteArx}[2]{\inte{#1}^{\cA}_{\rho(x:=#2)}}
\newcommand{\cV}[1]{{\cal V}({#1})}
\newcommand{\cVxr}[1]{\cV{#1}_{\xi\rho}}
\newcommand{\cVxrc}[2]{\cV{#1}_{#2}}

\newcommand{\ca}{{\rm CA}}
\newcommand{\weca}{{\rm WECA}}

\newcommand{\Models}{\mathrel{{|}\!{\models}}}
\newcommand{\nat}{{\sf nat}}
\newcommand{\suc}{{\sf succ}}
\newcommand{\zero}{{\sf O}}
\newcommand{\bool}{{\sf bool}}
\newcommand{\sI}{{\sf I}}
\newcommand{\true}{{\sf true}}
\newcommand{\false}{{\sf false}}
\newcommand{\usuc}{\underline{{\sf succ}}}
\newcommand{\uzero}{\underline{{\sf O}}}
\newcommand{\ind}{{\sf ind}}
\newcommand{\Ind}{{\sf Ind}}
\newcommand{\Lim}{{\sf Lim}}

\newcommand{\lampt}{\lambda P2}
\newcommand{\lamt}{\lambda 2}
\newcommand{\CC}{{\rm CC}}

\newcommand{\Var}{{\rm Var}}
\newcommand{\FV}{{\rm FV}}
\newcommand{\Vo}{\Var^{\star}}
\newcommand{\Vc}{\Var^{{\sf Kind}}}
\newcommand{\Boxx}{{\sf Kind}}

\newcommand{\llam}{{\boldmath \lambda}}
\newcommand{\lamstar}{\lambda^*}

\newcommand{\T}{{\sf T}}

\newcommand{\cT}{{\cal T}}

\newcommand{\cN}{{\cal N}}
\newcommand{\cP}{{\cal P}}
\newcommand{\cA}{{\cal A}}
\newcommand{\cS}{{\cal M}}

\newcommand{\bA}{{\bf A}}
\newcommand{\bk}{{\bf k}}
\newcommand{\bK}{{\bf K}}
\newcommand{\bKs}{{\bf K}_*}
\newcommand{\bI}{{\bf I}}
\newcommand{\bs}{{\bf s}}

\newcommand{\bone}{{\bf 1}}
\newcommand{\bLambda}{{\Lambda}}
\newcommand{\bLambdaid}{{\Lambda}{{\sf id}}}
\newcommand{\gen}[2]{\cP^{#1}(#2)}
\newcommand{\genC}{\gen{\bLambda}{C}}

\newcommand{\genidrefl}{\gen{\bLambdaid}{\{\refl\}}}

\newcommand{\Prodk}{\prod}
\newcommand{\Prodt}{\Pi}

\newcommand{\oftype}{:}

\newcommand{\red}{\longrightarrow}

\newcommand{\redJ}{\red_J}
\newcommand{\redb}{\red_{\beta}}
\newcommand{\redrefl}{\red_{\refl}}
\newcommand{\redsig}{\red_{\Sigma}}
\newcommand{\arr}{\rightarrow}

\newcommand{\IN}{{\rm I\kern-.23em N}}

\usepackage{tikz}
\usetikzlibrary{automata, positioning, arrows, shapes, decorations.pathmorphing,calc}
\tikzstyle{curvy}=[dashed,decorate,yshift=0.3cm,decoration={snake,amplitude=3mm,segment length=8mm,post length=1mm}]
\tikzstyle{block}=[draw, rectangle]
\tikzset{
    ->,
    >=stealth,
    node distance=3cm,
    every state/.style={line width=0.8pt},
    initial text=$ $,
    minimum size=2em,
    inner sep=1
}
\usepackage{tikz-cd}
\newcommand{\tup}[1]{\langle#1\rangle}
\newcommand{\pr}[1]{\pi_{#1}}

\newcommand{\refl}[0]{{\sf refl}}
\newcommand{\Jcomb}[0]{{\sf J}}
\newcommand{\U}[0]{\star}
\newcommand{\betar}[0]{=_{\beta}}

\newcommand{\funext}{{\sf FunExt}}
\newcommand{\UIP}{{\sf UIP}}

\newcommand{\bisim}{\sim}

\newcommand{\pack}{{\sf pack}}
\newcommand{\RecExists}{{\sf rec}_\exists}

\newcommand{\Stream}{{\sf Stream}}
\newcommand{\StreamB}{\Stream_{\bool}}

\newcommand{\hd}{{\sf hd}}
\newcommand{\tl}{{\sf tl}}
\newcommand{\CoRecStream}{{\sf corec}_{{\sf s}}}

\newcommand{\Resp}{{\sf Resp}}

\newcommand{\clsr}{{\sf cls}^*}

\newcommand{\IndQuot}{{\sf IndQuot}}

\newcommand{\quot}{{\sf quot}}
\newcommand{\LimQuot}{{\sf LimQuot}}
\newcommand{\RecQuot}{{\sf rec}_{{\sf q}}}
\newcommand{\cls}{{\sf cls}}

\title{Non-Derivability Results in Polymorphic Dependent Type Theory}
\author{Herman Geuvers
  \institute{Radboud University\\ Nijmegen, The Netherlands\\
    \and
    Eindhoven University of Technology\\
  The Netherlands}
\institute{Institute for Computing and Information Science\\
Faculty of Science}
\email{herman@cs.ru.nl}
}

\newcommand{\titlerunning}{Non-Derivability Results in $\lampt$}
\newcommand{\authorrunning}{H. Geuvers}

\hypersetup{
  bookmarksnumbered,
  pdftitle    = {\titlerunning},
  pdfauthor   = {\authorrunning},
  pdfsubject  = {EPTCS},               
  pdfkeywords = {Type theory, data types, formulas-as-types, realizability semantics} 
}

\begin{document}
\maketitle

\begin{abstract}
In the pure Calculus of Constructions (\CC)  one can
define data types and function over these, and there
is a powerful higher order logic to reason over these functions and data types \cite{CoquandHuet88,Berardi1993}. This is due to
the combination of impredicativity and dependent types, and most of
these features can already be observed in polymorphic (second order) dependent type
theory $\lampt$. The impredicative encoding of data
types (in $\lampt$ or \CC) is powerful but not fully satisfactory: for example, the
induction principle is not provable. As a matter of fact, it can be
shown that induction is not provable for whatever possible representation of
data types\cite{NoInduction}. In \cite{encodings}, Awodey, Frey and
Speight show that in an extension of $\lampt$ with a $\Sigma$-types,
identity types with uniqueness of identity proofs and function
extensionality, it is possible to define data types for which the
induction principle is provable. In \cite{BronsveldGvdW} it has been
shown that in this extension of $\lampt$, also quotient types can be
defined with the proper induction principle, and, using quotient types, coinductive types can
be defined with the proper coinduction principle.

This leaves various questions open: Are quotient types with induction
principle not definable in the original $\lampt$? And how about
coinduction types, is it impossible to get a strong coinduction principle in $\lampt$?  Looking at it from the other side: which of the
extensions used in \cite{encodings} are really needed to make induction and coinduction work?

In this paper, we contribute partial answers to these questions:
parametric quotient types are not definable in $\lampt$ and the
well-known definable stream type does not have a coinduction
principle. For the latter question we show that, if we just extend
$\lampt$ with $\Sigma$-types and identity types with uniqueness of
identity proofs, we still cannot prove an induction principle for the
natural numbers. So function extensionality is crucial in making
induction provable.

We show these results by studying
models of $\lampt$ where the types representing these principles are
empty, so these models act as counter models to the derivability of
the principles. Most of the well-studied models for polymorphic type
theory are parametric or in any case quite extensional
\cite{BainbridgeFSS1990,PlotkinAbadi1993,Hasegawa1994}, so in these
models, induction and coinduction just hold. An interesting exception
is work by Berardi and co-authors
\cite{BerardiBerline2002,BarbaneraBerardi2003,BerardiBerline2004}, who
describe non-parametric models of (non-dependent) polymorphic type
theory. The models we study here are of a more syntactic nature and
have been described and studied in
\cite{NoInduction,Geuvers1996}. They can be seen as a collection of
term models where specific choices can be made to construct counter
models.\footnote{Acknowledgments: Thanks to the referees for their useful and constructive comments.}
\end{abstract}

\section{Introduction}
This paper is dedicated to Stefano Berardi, to celebrate his 40th\footnote{in the appropriate numeral system}
birthday, and maybe even more to celebrate the numerous scientific
contributions he has made to logic and type theory. I met Stefano for
the first time in the fall of 1988, in Pisa where he came to visit
from Torino. I was spending the fall semester in Pisa, and Stefano
stayed with us in Tirrenia a number of nights. I was immediately
impressed by his broad knowledge, his deep insights and his original
ideas and I have been ever since. On top of that, Stefano is a
very nice person who generously shares his knowledge, a true
scientist. Thank you very much and congratulations Stefano!

The content of the present paper goes back to research that Stefano
was doing for his PhD.\ thesis and soon after that (e.g.\ see
\cite{Berardi1993}), looking into representing data types, functions
and reasoning principles over them in the Calculus of Constructions.
It was noticed, and it was kind of folklore at the time, that one needs
to add induction principles because they are not derivable. Later I
formally proved that indeed, induction is not derivable, for whatever
smart encoding of the data types you use \cite{NoInduction}. The
present paper continues this thread by showing that various other
reasoning principles are not derivable. To simplify the work we limit
ourselves to second order dependent type theory as opposed to the full
Calculus of Constructions, but we are confident the results can be
extended straightforwardly.

In second order dependent type theory, $\lampt$, we can encode all kinds
of inductive data types, like the types of natural numbers, lists,
trees etcetera (see \cite{GirardTaylorLafont93} for a general exposition), which yields
e.g.\ the well-known polymorphic Church numerals as interpretation of
the natural numbers. This encoding already works for non-dependent
second order type theory (the well-known polymorphic
$\lambda$-calculus $\lamt$), but dependent types give the extra
advantage that we can also state the {\em induction principle\/} for
the inductive data types. For example, if $\nat$ is the well-known type of
polymorphic Church numerals with zero $\zero$ and successor function
$\suc$\footnote{As a reminder: $\nat := \Pi \alpha:\U.\ \alpha\to(\alpha\to\alpha)\to\alpha$, $\suc := \lambda y:\nat.\ \lambda \alpha:\U.\ \lambda x: \alpha.\ \lambda f: \alpha\to\alpha.\ f\ (y\ \alpha\ \ x\ f)$ and $\zero := \lambda \alpha:\U.\ \lambda x: \alpha.\ \lambda f: \alpha\to\alpha.\ x$}, then the induction principle is represented by the type
$\ind$ defined as
$$\ind := \Pi P: \nat\arr\star.\ P\ \zero \arr (\Pi y\oftype \nat .\ P\
y \arr P\ (\suc\ y))\arr \Pi x\oftype \nat.\ P\ x.$$
Here, $\star$ denotes the `kind' (universe) of all types, which
captures both the sets ($\nat : \star$) and the propositions ($\ind :
\star$). 
The induction principle for $\nat$ is said to be {\em derivable\/} in
$\lampt$ if there is a closed term of type $\ind$.

It is well-known that the induction principle for $\nat$ is not
derivable in $\lampt$. A proof specifically for $\nat$, is given in
\cite{Streicher92}, using realizability semantics, and also in
\cite{Rummelhoff2004} using PER models.  But the situation is more
general, as we have shown in \cite{NoInduction}: If $N$ is a type with
$0:N$ and $s:N\arr N$, then there cannot be a (closed) term of type
$\Pi P: N\arr\star.\ P\ \zero \arr (\Pi y\oftype N .\ P\
y \arr P\ (\suc\ y))\arr \Pi x\oftype N.\ P\ x$. So there are no smart ways to encode
the natural numbers in $\lampt$ as an initial algebra. An obvious
smart encoding would be by relativizing to the ``inductive natural
numbers'', internalizing the induction principle into the notion of a
natural number. Then one would define $N$ as follows.
\begin{eqnarray*}
  \Ind\  x &:=& \Pi P: \nat\arr\star.\ P\ \zero \arr (\Pi y\oftype \nat .\ P\
y \arr P\ (\suc\ y))\arr P\ x,\\
N &:=& \exists
x\oftype\nat.\ \Ind\  x.
\end{eqnarray*}
Here, $\exists$ is defined in the well-known second order
way: $\exists x\oftype\sigma.\tau := \Pi \alpha \oftype\star.\ (\Pi
x\oftype\sigma.\ \tau\arr\alpha)\arr\alpha$. By using the definable
$\exists$-elim and $\exists$-intro rules, it is now easy to define
$\uzero$, $\usuc$ for this encoding:
\begin{eqnarray*}
  \uzero&:=& \lambda\alpha:\star.\ \lambda h\oftype \Pi
  x\oftype\nat.\ \Ind\  x\arr\alpha.\ h\ \zero\ q_{\zero},\\
  \usuc&:=& \lambda n\oftype N.\ n\  N\ \big(\lambda x\oftype \nat.\ \lambda
  p\oftype \Ind \  x.\\
	&& \lambda\alpha\oftype\star.\ \lambda h\oftype \Pi
  y\oftype\nat.\ \Ind\  y\arr\alpha).\ h\ (\suc\  x)\ (q_{\suc}\ x\ p)\big),
\end{eqnarray*} 
where $q_{\zero}$ and $q_{\suc}$ are terms such that $q_{\zero}:
\Ind\ \zero$ and $q_{\suc}: \Pi x\oftype\nat.\ \Ind\ x\arr
\Ind\ (\suc\ x)$.  But again, also for these natural numbers,
induction is not provable in $\lampt$, as \cite{NoInduction} proves.

In \cite{encodings}, Awodey, Frey and Speight have shown that in a
moderate extension of $\lampt$, with concepts inspired from Homotopy
Type Theory, induction is provable for natural numbers, taking the
``relativized'' definition above, with strong $\Sigma$-types instead
of the definable (weaker) $\exists$-types. In
fact, \cite{encodings} relativizes with a different predicate called
$\Lim$ which is a direct translation of the concept of initial
algebra: $\Lim$ singles out the terms of type $N$ that turn $N$ into a
type satisfying the initial algebra property. In \cite{BronsveldGvdW}
we have shown that one can relativize in various ways, by translating
the categorical initiality property.
The type theory of \cite{encodings} that induction can be proven in is the extension of $\lampt$ with
\begin{enumerate}
\item Identity types,
\item Uniqueness of Identity Proofs (\UIP),
\item Function extensionality (\funext),
\item $\Sigma$-types.
\end{enumerate}
In \cite{BronsveldGvdW} we have shown that in this extension of
$\lampt$, also quotient types (as initial algebras) can be defined with the proper induction
principle, and using those, coinductive types can be defined with the proper
coinduction principle.  This also raises some questions: can quotient
types not already be defined in $\lampt$ itself? And what about
coinductive types, are the definable coinductive type (e.g.\ streams)
not already terminal co-algebras? And if not, isn't there a `smart'
encoding of terminal co-algebras in $\lampt$? From the other side, one
may wonder which of the extensions of $\lampt$ are really needed to
obtain induction: is function extensionality really needed? Are
$\Sigma$-types really needed?

\subsection{Overview of main results}

In the present paper we give some partial answers to the questions, and we do
so mainly by syntactic analysis and by constructing {\em counter-models}: we construct a model of
$\lampt$ where there are no {\em parametric\/} quotient types, that is, quotient where the quotient operations are polymorphic. We also show a
model in which the well-known (polymorphically definable) stream data type is not a terminal co-algebra. Finally, we study the extension of $\lampt$ with $\Sigma$-types and identity types (as in \cite{encodings}) and we give a model of these where
\UIP\ holds, but there is no induction for natural numbers.

The models we construct are of a syntactical nature, variations on the
models that we considered in \cite{NoInduction} to show that induction is
not provable in ``vanilla $\lampt$''. In that respect, this work can
be seen in the style of Stefano Berardi, who has also studied more
syntactical models, e.g.\ models of polymorphic $\lambda$ calculus
that are not parametric
\cite{BerardiBerline2002,BarbaneraBerardi2003,BerardiBerline2004},
that are based on models of untyped $\lambda$ calculus. Our models are
similar in spirit, but based on arbitrary {\em weakly extensional combinatory
  algebras}, \weca. The prime example of a \weca\ is the set of (open)
untyped $\lambda$-terms modulo $\beta$ or $\beta\eta$, so we will use
that a lot.

\section{Second order dependent type theory}
Second order dependent type theory, $\lampt$, extends polymorphic $\lambda$-calculus ($\lamt$ or system $F$) with dependent types
and it was first
introduced as such in \cite{LongoMoggi91}. It is a subsystem of the
Calculus of Constructions (\cite{CoquandHuet88}, \cite{Coquand89}), where the
operations of forming type constructors are restricted to second order
ones. (So, there are type constructors of kind $\sigma
\arr \star$, but not of kind $(\sigma
\arr \star)\arr \star$.) It can also be seen as an extension of the
first order system $\lambda P$, where quantification over type
constructors has been added. For an extensive discussion on these
systems and their relations, we refer to \cite{Barendregt92} or
\cite{NederpeltGeuvers14}. Here we just define the system $\lampt$ and give some
initial motivation for it.

\begin{definition} The set of
{\em pseudo-terms},$\T$, of the type system $\lampt$ is given by 
$$ \T ::=
\star\ |\ \Boxx \ |\  \Var \ |\ (\Pi \Var \oftype \T .\ \T) \ |\ 
(\lambda \Var \oftype \T .\ \T) \ |\  \T\ \T ,$$
where $\Var$ is a countable set of variables.
On $\T$ we have the usual
notions of $\beta$-reduction,  $\red_{\beta}$. The
transitive reflexive symmetric closure of $\red_{\beta}$ is denoted by
$=_{\beta}$.

The deduction rules for $\lampt$ derive judgments of the forms $\Gamma \vdash M:T$, with $M$ and $T$
pseudo-terms and $\Gamma$ a {\em context}, a finite
sequence of declarations $v_1 \oftype T_1, \ldots ,v_n \oftype T_n$. Here, $v$ ranges over $\Var$, $s$,
$s_1$ and $s_2$ range over $\{\star,\Boxx\}$ and $M,N,T$ and $U$ range over
$\T$.

\begin{align*}
  \begin{aligned}
    \infer[\text{axiom}]{\star:\Boxx}{%
    }
    \qquad
           &
    \infer[\text{var}]{\Gamma, v: T \vdash M: U}{%
       \Gamma \vdash T : \star / \Boxx  
           &
      \Gamma \vdash M:U
    }
    \\\\
    \infer[\text{weak}]{\Gamma, v: T \vdash M: U}{%
      \Gamma, v: T \vdash M: U
      &
      \Gamma \vdash T : \star / \Boxx
    }
    \qquad
    &
    \infer[\text{conv$_{\beta}$, if }T=_{\beta}U]{\Gamma \vdash M: U}{%
      \Gamma\vdash M: T
           &
      \Gamma \vdash U : \star / \Boxx
    }
    \\\\    
    &
    \infer[\Pi\text{ if }(s_1,s_2) \neq (\Boxx,\Boxx)]{\Gamma \vdash \Pi x: U. T : s_2}{%
      \Gamma \vdash U : s_1
           &
      \Gamma, x: U \vdash T : s_2
    }
    \\\\
    \infer[\lambda]{\Gamma \vdash \lambda x: U. b : \Pi x: U. T}{%
      \Gamma \vdash \Pi x: U . T : s
      &
      \Gamma, x: U \vdash b : T
    }
    \qquad
           &
    \infer[\text{app}]{\Gamma \vdash f\ a : T[x:=a]}{%
      \Gamma \vdash f : \Pi x: U. T
           &
      \Gamma \vdash a: U
    }
  \end{aligned}
\end{align*}

In the rules (var) and (weak) it is always assumed that
the newly declared variable is fresh, that is, it has
not yet been declared in $\Gamma$. 
\end{definition}

For notational convenience, we split up
the set $\Var$ into a set $\Vo$, the {\em term variables}, and $\Vc$, the
{\em constructor variables}. Term variables are denoted by $x, y, z,
\ldots$ and constructor variables by $\alpha, \beta ,\ldots$. In the rules (var) and
(weak), we take the variable $v$ out of $\Vo$ if $s = \star$ and out of $\Vc$
if $s = \Boxx$.
The well-typed terms can be split into the following disjoint subsets:
\begin{itemize}
\item $\{\Boxx\}$, 
\item the set of {\em kinds}: expressions $A$ such that $\Gamma \vdash
A:\Boxx$ for some $\Gamma$; this includes $\star$.\\
In $\lampt$ all kinds are of the form $\Pi x_1\oftype \sigma_1 \ldots
\Pi x_n \oftype \sigma_n . \star$, with $\sigma_1 ,\ldots ,\sigma_n$ 
{\em types\/} and $x_1 ,\ldots ,x_n \in \Vo$.
\item the set of {\em constructors}: expressions of type a `kind', i.e.\
expressions $P$ such that $\Gamma \vdash P:A$ for some kind $A$; this
includes the {\em types}, expressions of type $\star$.\\
In $\lampt$ all constructors are of one of the following forms
\begin{itemize}
\item $\alpha \in \Vc$,
\item $P\ t$, with $P$ a constructor and $t$ an {\em term},
\item $\lambda x\oftype \sigma.\ P$, with $\sigma$ a type, $P$ a constructor, 
$x\in \Vo$,
\item $\Pi x\oftype \sigma.\ \tau$, with $\sigma$ and $\tau$ types, $x\in \Vo$,
\item $\Pi \alpha\oftype A.\ \tau$, with $A$ a kind, $\tau$ a type, 
$\alpha\in \Vc$.
\end{itemize}
\item the {\em terms}: expressions of type a `type', i.e.\ expressions $M$ such
that $\Gamma \vdash M:\sigma$ for some type $\sigma$.
In $\lampt$ all terms are of one of the following forms
\begin{itemize}
\item $x \in \Vo$,
\item $q\ t$, with $q$ and $t$ an terms,
\item $q\ P$, with $P$ a constructor and $q$ an term,
\item $\lambda x\oftype \sigma.\ t$, with $\sigma$ a type, $t$ an term, 
$x\in \Vo$,
\item $\lambda \alpha\oftype A.\ t$, with $A$ a kind, $t$ an term, 
$\alpha\in \Vc$.
\end{itemize}\end{itemize}
As a convention, we denote kinds by $A, B, C,\ldots$, types by
$\sigma,\tau,\ldots$, constructors by $P,Q, \ldots$ and terms by $t,
q, \ldots$.

\subsection{Extensions}
In \cite{encodings} and in \cite{BronsveldGvdW}, the system $\lampt$
is extended with various features to define a data type for natural
numbers that has induction and to define quotient types and
co-inductive types with a co-induction principle. We list these
extensions here, and we will consider some of these in the models of
$\lampt$ to see which extensions are needed to accomplish the data
types of \cite{encodings,BronsveldGvdW}.

\begin{definition}\label{def.extensions}
We consider the following rules as possible extensions of $\lampt$.\\
{\bf \underline{Identity Type}}
$$\begin{array}{c}
     \infer[\text{identity}]{\Gamma \vdash a =^{\sigma} b : \U}{%
      \Gamma \vdash \sigma : \U
           &
      \Gamma \vdash a : \sigma
           &
      \Gamma \vdash b : \sigma
    }
    \qquad
        \infer[\text{refl}]{\Gamma \vdash \refl : a =^{\sigma} a}{%
      \Gamma \vdash \sigma : \U
           &
      \Gamma \vdash a : \sigma
    }
\\\\
  \infer[\text{J}]{\Gamma \vdash \Jcomb (c,a,b,q) : \tau[x:=a, y:=b, p:=q]}{%
    \begin{gathered}
      \Gamma \vdash a : \sigma \qquad
      \Gamma \vdash b : \sigma \qquad
      \Gamma \vdash q : a =^{\sigma} b
      \\
      \Gamma, x: \sigma, y: \sigma, p: x =^{\sigma} y \vdash \tau : \U \qquad
      \Gamma \vdash c : \Pi z: \sigma. \tau[x,y:=z,p := \refl]
    \end{gathered}
  }
\end{array}$$
The rules for (identity) comes with an equality rule:
$$\Jcomb (c,a,a,\refl) \redrefl c\ a$$
For identity types, the following two principles may be considered, {\em functional extensionality}, \funext, and {\em uniqueness of identity proofs}, \UIP. (We omit the uppercase labels in the identity types, so we write $x = y$ for $x =^{\alpha} y$ etc.)  
\begin{align*}
  \funext \quad & : \quad \Pi f, g : (\Pi x : \sigma.\ \tau) .\ (\Pi x: \sigma.\ f\ x = g\ x ) \implies f = g \\
  \UIP \quad    & : \quad \Pi \alpha: \U.\ \Pi x, y: \alpha .\ \Pi p, q: x = y.\ p =  q
\end{align*}
{\bf \underline{$\Sigma$-Type}}
\begin{align*}
  \begin{aligned}
    \infer[\Sigma]{\Gamma \vdash \Sigma x:\sigma. \tau : \U}{%
      \Gamma \vdash \sigma : \U
           &
      \Gamma, x: \sigma \vdash \tau : \U
    }
    \qquad
    &
        \infer[\tup{-,-}]{\Gamma \vdash \tup{a,b} : \Sigma x: \sigma. \tau}{%
      \Gamma \vdash a : \sigma \quad \Gamma \vdash b : \tau[x := a]
    }
    \\\\
    \infer[\pr1]{\Gamma \vdash \pr1\ p : \sigma}{%
      \Gamma \vdash p : \Sigma x: \sigma. \tau
    }
    \qquad
           &
    \infer[\pr2]{\Gamma \vdash \pr2\ p : \tau[x := \pr1\ p]}{%
      \Gamma \vdash p : \Sigma x: \sigma. \tau
    }
  \end{aligned}
\end{align*}
The rules for $\Sigma$ also comes with equality rules:
\begin{eqnarray*}
  \pr1\ \tup{a, b} &\betar& a \\
  \pr2\ \tup{a, b} &\betar& b
\end{eqnarray*}
      
\end{definition}

The well-known encoding of inductive data types in polymorphic
$\lambda$-calculus extends immediately to $\lampt$.

\begin{example}\label{exa.natbool}
\begin{enumerate}
\item The natural numbers can be encoded by $\nat := \Pi
\alpha\oftype\star. \alpha\arr(\alpha\arr\alpha)\arr\alpha$, with zero
and successor:
\begin{eqnarray*}
\zero &:=& \lambda \alpha\oftype \star.\ \lambda x\oftype\alpha.\ \lambda
f\oftype\alpha\arr\alpha.\ x,\\
\suc &:=& \lambda n\oftype
\nat.\  \lambda \alpha\oftype \star.\ \lambda x\oftype\alpha.\ \lambda
f\oftype\alpha\arr\alpha.\ f\ (n\ \alpha\ x\ f).
\end{eqnarray*}
The induction principle reads
$$\ind_{\nat} := \Pi P: \nat\arr\star.\ P\ \zero \arr (\Pi y\oftype \nat .\ P\
y \arr P\ (\suc\ y))\arr \Pi x\oftype \nat.\ P\ x.$$
\item The type of booleans can be encoded by $\bool := \Pi
\alpha\oftype\star.\ \alpha\arr\alpha\arr\alpha$, with true and false:
\begin{eqnarray*}
\true &:=& \lambda \alpha\oftype \star.\ \lambda x,y\oftype\alpha.\ x\\
\false &:=& \lambda \alpha\oftype \star.\ \lambda x,y\oftype\alpha.\ y.
\end{eqnarray*}
The induction principle reads 
$$\ind_{\bool} := \Pi P\oftype
\bool\arr\star.\ P\ \true \arr P\ 
\false \arr \Pi x\oftype \bool.\ P\ x.$$
\item Equality is defined in $\lampt$ using Leibniz equality: for
$\sigma:\star$, $t,q:\sigma$
$$t=_{\sigma} q \:\: := \:\: \Pi P\oftype \sigma \arr \star.\ P\ t \arr
P\ q.$$
\end{enumerate}
\end{example}

One would like the induction principle to be provable in $\lampt$, but this is not the case, not for $\nat$ and not for $\bool$, and as a matter of fact also not for other definitions of the natural numbers or the booleans \cite{NoInduction}. There are also co-inductive data types that are definable in $\lampt$, for which one could hope for a co-induction principle. We show the example of streams, but before that, we recall the definable existential types, and we introduce some notation for them.

\begin{example}\label{exa.exists}
  For $\sigma$ a type, possibly containing the type variable $\alpha$,
  the existential type $\exists \alpha:\U.\ \sigma$ is defined as $\Pi
  \beta:\U.\ (\Pi \alpha:\U.\ \sigma \to\beta)\to\beta$.
\begin{itemize}
\item  For $\tau:\U$ and $f: \sigma [\alpha :=\tau]$, we have a term
  $\pack\ \tau\ f:= \lambda \beta:\U.\ \lambda k: \Pi
  \alpha:\U.\ \sigma \to\beta.\ k\ \tau\ f$, so $\pack : \Pi
  \beta:\U.\ \sigma [\alpha :=\beta] \to \exists \alpha:\U.\ \sigma$ is
  the term encoding $\exists$-introduction.
\item  For elimination, we have, for $\tau$ a type with $\alpha
  \notin\FV(\tau)$, $g: \Pi \alpha:\U.\ \sigma \to \tau$, and $t:
  \exists \alpha:\U.\ \sigma$, the term $\RecExists\ \tau\ g\ t :=
  t\ \tau\ g : \tau$. So $\RecExists : \Pi \beta:\U.\ (\Pi\alpha
  :\U.\ \sigma \to\beta) \to (\exists \alpha:\U.\ \sigma) \to \beta$
  is the $\exists$-elimination term, defined by $\RecExists := \lambda
  \beta:\U.\ \lambda g: \Pi\alpha :\U.\ \sigma \to\beta.\ \lambda t:
  \exists \alpha:\U.\ \sigma.\ t\ \beta\ g$.  
\end{itemize}
\end{example}

In $\lampt$ we also have the well-known product data type $\sigma
\times \tau$ that we will not spell out here. Pairing and projection are
definable: for $a :\sigma$ and $b:\tau$ we have $\tup{a,b} :\sigma
\times\tau$ and we have $\pr1 :\sigma\times \tau \to \sigma$ and $\pr2
:\sigma\times \tau \to \tau$ with $\pr1\ \tup{a,b} \redb a$ and
$\pr2\ \tup{a,b} \redb b$. For functions on product types, we allow
ourselves some syntactic sugar via ``pattern abstractions'', writing
$\lambda \tup{x,y}:\sigma\times \tau. \ldots$ for $\lambda
z:\sigma\times\tau.\ldots$ where in the second case we have to refer
to $\pi_1\ z$ and $\pi_2\ z$ and in the first we can just refer to $x$
and $y$.

\begin{example}\label{exa.streams}
The {\em stream type} over type $\sigma$, denoted by $\Stream$, is defined as $\exists \alpha:\U. \alpha \times (\alpha \to \sigma) \times (\alpha \to \alpha)$\footnote{If we spell it out fully, unfolding the definition of existential types as given in Example \ref{exa.exists}, we have $\Stream = \Pi \beta:\U.\ (\Pi \alpha:\U. \alpha \to (\alpha \to \sigma) \to (\alpha \to \alpha) \to \beta) \to \beta$}.
  The destructor $\hd(s: \Stream)$ is defined as $\RecExists\ \sigma\ (\lambda \alpha. \lambda \tup{x, h, t}.\ h\ x)\ s$,
  and the destructor $\tl(s: \Stream)$ is \[ \RecExists\ \Stream\ (\lambda \alpha. \lambda \tup{x, h, t}.\ \pack\ \alpha\ \tup{t\ x, h,t})\ s \ : \ \Stream. \]
  Finally, we define the {\em corecursor} $\CoRecStream$ as follows.
  \[
    \CoRecStream := \lambda \alpha : \U.\ \lambda h : \alpha \to \sigma .\ \lambda t : \alpha \to \alpha .\ \lambda x : \alpha .\ \pack\  \alpha\  \tup{x, h, t}.
  \]
The corecursor and the destructors (the head and tail functions) compute as expected: for $h:\tau\to\sigma$, $t:\tau\to\tau$ and $x:\tau$ we have
\begin{eqnarray*}
  \hd\  (\CoRecStream\  \tau\  h\  t\  x)  \betar h\  x  &\qquad &
  \tl\  (\CoRecStream\  \tau\  h\  t\  x)  \betar \CoRecStream\  \tau\  h\  t\  (t\  x)
\end{eqnarray*}
There are various (equivalent) ways to make this stream type a final co-algebra, as discussed in \cite{BronsveldGvdW}. Here are two.
\begin{enumerate}
\item The stream data type satisfies the {\em co-induction principle\/} if bisimilarity of streams implies their equality, more precisely, we have a term of type
  $$\Pi s,t : \Stream.\ s \bisim t \rightarrow s = t,$$
  where $s \bisim t$ is defined as $\exists R : \Stream \arr\Stream\arr \U.\ (R\mbox{ is a bisimulation relation and } R\ s\  t)$, where $R$ is a {\em bisimilation} if it is symmetric and for all $x, y : \Stream$, if $R\ x\ y$ then
   $\hd\  x = \hd\  y$ and $R\  (\tl\  x)\  (\tl\  y)$.
\item The corecursor is unique: if, for $h:\tau \to \sigma$ and $t:\tau \to\tau$, we have a term $g: \tau \to\Stream$ with
  \begin{eqnarray*}
  \hd\  (g\  x)  = h\  x  &\qquad &
  \tl\  (g\  x)  = g\  (t\  x),
\end{eqnarray*}
  then $g = \CoRecStream\  \tau\  h\  t$
\end{enumerate}
In Theorem \ref{thm.nocoind} we will show that these definable streams are not a final co-algebra, so these two principles do not hold. 
\end{example}

\subsection{Quotients}\label{sec.quotients}
To turn the stream type of Example \ref{exa.streams} into a final
co-algebra of streams, as we have done in \cite{BronsveldGvdW}, we
need to quotient it by the bisimulation relation. So we now introduce
quotients.

\begin{definition}\label{def.polymquotients}
  Given $\sigma, \tau :\U$ and $R: \sigma \to \sigma \to \U$, we say that a function $f : \sigma \to \tau$ {\em respects} $R$ if for all $x, y : \sigma $ such that $R \ x \ y$, we have $f \ x = f \ y$, and we write $\Resp\  f\  R$ for the type $\Pi x, y : \sigma .\ R \  x \  y \to f \  x = f \  y$.
  We define the {\em quotient type}  $\quot\  \sigma \  R$ to be
  $$\quot\  \sigma \  R := \Pi C: \U.\ \Pi f: \sigma  \to C.\ \Resp\  f\  R \to C.$$
The {\em class function} $\cls : \sigma  \to \quot\ \sigma \ R$ is defined to be
  \[ \cls := \lambda d : \sigma .\ \lambda C : \U.\ \lambda f: \sigma  \to C.\ \lambda H : \Resp\  f\  R.\ f\  d.\]
Note that the hypothesis $H$ doesn't occur in the body $f\ d$; it is not needed to define $\cls$.
  We write $\sigma  / R := \quot\ \sigma \ R$.
We can lift a function $f: \sigma  \to E$ that respects $R$ to a function $\widehat{f}: \sigma  / R \to E$. This lifting is done by the {\em recursor for quotient types}, $\RecQuot$, which is
defined as follows.
  \[
    \RecQuot := \lambda C : \U .\ \lambda f : \sigma  \to C .\ \lambda H : \Resp\  f\  R .\ \lambda q : \sigma  / R .\ q \  C\  f\  H
  \]
  We write $\widehat{f} := \lambda q.\ \  \RecQuot\  C\  f\  H\  q$ if $C$ and $H$ are clear from the context.
The lifted function satisfies $\widehat{f} \circ \cls = f$.
\end{definition}


In \cite{BronsveldGvdW}, it is shown that, if $R\ x\ y$, then
$\cls\ x = \cls\ y$. The proof uses function extensionality. Then it is also shown how to obtain a strong form of quotients, by
using $\Sigma$-types to define the sub-type of $\sigma /R$ of terms $q$
that satisfy a $\LimQuot$ predicate: basically we want to define the
subset of $\sigma /R$ for which $\widehat{f}$ is the {\em unique\/} function for which
$\widehat{f} \circ \cls = f$.
This also raises the question whether
the definable quotient type $\sigma /R$ is not already a strong quotient
type itself in $\lampt$? Or if there isn't a different, ``smarter''
representation of quotient types in $\lampt$ that is strong? We will
prove these questions in the negative by first specifying what we would
expect from a quotient type.

\begin{definition}\label{def.quotient}
  A {\em parametric quotient type\/} is given by
  \begin{enumerate}
  \item for $\sigma :\U$ and $R:\sigma \to \sigma \to \U$, a type $\sigma /R$, that we call the {\em quotient type of $\sigma $ modulo $R$},
  \item a term $\cls : \Pi \alpha:\U.\ \Pi R:\alpha\to \alpha\to \U.\ \alpha \to \alpha/R$,
  \item such that the following type is inhabited (for every $\sigma:\U$): $\Pi x,y:\sigma .\ R\ x\ y \to \cls\ \sigma \ R\  x =_{\sigma /R} \cls\ \sigma\ R\  y$;
  \item a term construction $\RecQuot$ satisfying
    $$\begin{prooftree}
    \tau:\U \quad f : \sigma  \to \tau \quad p: \Pi x,y:\sigma .\ R\ x\ y \to f\ x =_{\tau} f\  y  
    \justifies
    \RecQuot\ p\ f : \sigma /R \to \tau
  \end{prooftree}$$
    where we abbreviate $\RecQuot\ p\ f$ to $\widehat{f}$
  \item such that $\widehat{f} (\cls\ x) \betar f\ x$ for every $x:\sigma $.
  \end{enumerate}
  We call $\sigma /R$ a {\em strong parametric quotient type\/} if we have, in addition to the above
    \begin{enumerate}
    \item[6.] a term construction $\IndQuot$ satisfying
    $$\begin{prooftree}
    P:\sigma /R \to\U \quad t: \Pi x:\sigma .\ P\ (\cls\ x)  
    \justifies
    \IndQuot\ t : \Pi y:\sigma.\ R. P\ y  
  \end{prooftree}$$
    \end{enumerate}
\end{definition}

\weg{
  \begin{figure}\label{quot-eta}
  \centering
  $\begin{array}{lcr}
      \begin{tikzcd}
        D \arrow[d, "f"'] \arrow[r, "\clsr"] & D /^* R \arrow[ld, "\exists! \widehat{f}", dashed] \arrow[ld, "=", phantom, shift right=4] \\
        C                                   &
      \end{tikzcd}
       & \qquad &
      \begin{tikzcd}
        D \arrow[d, "g"'] \arrow[r, "\clsr"] \arrow[dd, "g'"', bend right=49] & D /^* R \arrow[ld, "\widehat{g}"] \arrow[ldd, "\widehat{g'}", bend left] \\
        X \arrow[d, "f"']                                                    &                                                                        \\
        Y                                                                    &
      \end{tikzcd}
    \end{array}$
\end{figure}
}

In the definition of the notion of quotient type, we do not require that $R$ is an equivalence
  relation. To define $\widehat{f}$ we require $\Resp\  f\  R$, i.e.\ $f$ should respect $R$, but of course $=_{\sigma /R}$ is an equivalence relation, so in fact, the type $\sigma /R$ in Definition \ref{def.quotient} is the
  quotient of $\sigma $ by the smallest equivalence relation containing $R$ \footnote{A similar idea was used by Hofmann~\cite{Hofmann95} to define quotients.}.

We could expect more from a strong quotient then what we required in
Definition \ref{def.quotient}, and some of these properties are
derivable, sometimes under weak additional assumptions. For example,
$\widehat{\cls}\  y =_{\sigma/R} y$ (for $y:\sigma/R$) is derivable from $\IndQuot$
and the uniqueness of $\widehat{f}$ (the $\eta$-rule for quotients) is
derivable if we also have functional extensionality. Finally, one could desire the quotient to be {\em effective\/} by requiring  $\Pi x,y:\sigma .\ \cls\ \sigma \ R\  x =_{\sigma /R} \cls\ \sigma\ R\  y \to R\ x\ y$. 
In \cite{BronsveldGvdW}, we show how to construct parametric quotients in an extension of $\lampt+\funext$. We do not exploit this further here, as we can already show that the parametric quotients of Definition \ref{def.quotient} don't exist in $\lampt$, in Proposition \ref{thm.no-quotients}.

It should be noted that we do not prove a general statement like {\em quotients are not definable in $\lampt$\/} simply because it is false in general: for specific $\sigma$ and $R$, the quotient $\sigma/R$ does exist. We give the example of the trivial quotient where we take $R$ to be Leibniz equality.

\begin{example}\label{exa.trivquotient}
Taking $\sigma:\nat$ and $R\ x\ y := x=_{\nat} y$, we define $\nat/R
:= \nat$ and we see that it is a non-parametric quotient in the sense  of Definition
\ref{def.quotient}, where we omit the requirement that $\cls$ is a polymorphic function.

To be precise, we have $\cls : \nat \to \nat/R$ defined by $\cls := \lambda x:\nat.\ x$, which obviously satisfies
\begin{itemize}
\item $\Pi x,y :\nat.\ R\ x \ y \to \cls\ x =_{\nat}\cls\ y$;
\item for $f :\sigma \to \tau$ with $\Resp\ f\ R$, taking $\widehat{f} := f$, we have $\widehat{f}\circ \cls = f$;
\item if $\Pi x:\nat.\ P\ (\cls\ x) $, then $\Pi y:\nat/R.\ P\ y $;
\item $\Pi x,y :\nat.\  \cls\ x=_{\nat}\cls\ y \to R\ x \ y $,
\end{itemize}
so this example of a trivial quotient is also an effective quotient.
\end{example}

\section{Model construction for $\lampt$}
The model notion for $\lampt$ we use is the same as in \cite{NoInduction,Geuvers1996}. It can be
extended to a class of models for the Calculus of Constructions, which
is done in \cite{stegeu95}.
Our models are a kind of realizability semantics in the sense that types will be interpreted as sets of realizers, where the realizers are taken from a combinatory algebra. The terms will be interpreted as the realizers. To cater for dependent types, our models are built from {\em weakly extensional
  combinatory algebras} (\weca). A {\em combinatory algebra}
  (\ca\ for short) is a tuple 
$\cA = \langle \bA,\cdot, \bk, \bs\rangle$, with $\bA$ a set,
$\cdot$ a binary function from $\bA\times \bA$ to $\bA$ (as usual
denoted by infix notation),
$\bk,\bs\in\bA$ such that
$(\bk \cdot a)\cdot b = a$ and $((\bs \cdot a)\cdot b)\cdot c =
(a\cdot c)\cdot (b\cdot c)$. For $\cA$ a combinatory algebra, the
set of {\em terms over $\cA$}, $\cT(\cA)$, is defined by letting
$\cT(\cA)$ contain infinitely many variables $v_1, v_2, \ldots$ and
distinct elements $c_a$ for every $a\in\bA$, and letting $\cT(\cA)$
be closed under application (the operation $\cdot$). Given a
term $t$ and a valuation $\rho$, mapping
variables to elements of $\bA$, the {\em interpretation of $t$ in
  $\bA$ under $\rho$}, notation $\inteAr{t}$, is defined in the usual
way ($\inteAr{c_a} = a$,  $\inteAr{MN} = \inteAr{M}\cdot \inteAr{N}$,
  etcetera). A crucial property of \ca s is 
that they are {\em combinatory complete}: if $t[v] \in\cT(\cA)$
is a term with free variable $v$, then there is an element in
$\bA$, usually denoted by $\lamstar v .\ t[v]$, such that $\inteAr{(\lamstar v .\ t[v])} \cdot a = \inteArx{t[x]}{a}$ for
  all $\rho$ and $a\in \bA$.
A \ca\ is {\em weakly extensional\/} if $\inteArx{t_1}{a} =
\inteArx{t_2}{a}$ for all $a\in \bA$ implies $\inteAr{\lamstar
x. t_1} = \inteAr{\lamstar x. t_2}$.


The need for {\em weakly extensional\/} \ca s comes from the fact that
  to make sure that the conversion rule is sound, we want the following to hold in our $\lampt$-model.
$$M=_{\beta}N \implies \intyr{M} = \intyr{N} \mbox{ for all }\rho,$$
where $\intyr{-}$ interprets pseudo-terms  as elements of $\bA$, using a
 valuation $\rho$ for the free variables. The interpretation $\intyr{-}$ is
 close to $\inteAr{-}$, where we now also
interpret abstraction: under $\intyr{-}$, $\lambda$ is interpreted as
$\lamstar$. 

\begin{example}\label{exaweca}
\begin{enumerate}
\item A standard example of a \weca\ is $\bLambda$, consisting of the
  classes of open $\lambda$-terms modulo $\beta$-equality. So, $\bA$
  is just $\Lambda/\beta$ and $[M] = [N]$ iff $M=_{\beta} N$.  It is
  easily verified that this yields a \weca.
\item Given a set of constants $C$, we define the \weca\ $\bLambda(C)$
  as the equivalence classes of open $\lambda_C$-terms
  (i.e.\ lambda-terms over the constant set $C$) modulo $\beta
  c$-equality, where the $c$-equality rules says
$$c N =_c c.$$
for all $c\in C$ and $N\in \Lambda_C$.
\item Another example of a \weca\ is $\bone$, the {\em degenerate\/}
  \weca\ where $\bA= 1$, the one-element set. In this case $\bk =\bs$,
  which is usually not allowed in combinatory algebras, but note that
  we do allow it here.
\end{enumerate}
\end{example}

The types of $\lampt$ will be interpreted as subsets of $\bA$.

\begin{definition}\label{def.polyset} 
A {\em polyset structure\/} over the weakly extensional 
combinatory algebra $\cA$ is a collection $\cP\subseteq \wp(\bA)$ such
that
\begin{enumerate}
\item $\bA \in \cP$,
\item $\emptyset\in \cP$\footnote{In \cite{NoInduction,Geuvers1996,stegeu95} we don't require this, and we call models where $\emptyset\in \cP$ ``consistent'', but here we only want to talk about consistent models anyway.},
\item $\cP$ is closed
  under arbitrary intersection $\bigcap$, \item $\cP$ is closed under
  {\em dependent products}, i.e.\ if $X\in \cP$ and $F:X\rightarrow
  \cP$, then $\Prodt_{t\in X} F(t)\in\cP$, where $\Prodt_{t\in X}
  F(t)$ is defined as
$$\{ a \in \bA \ |\ \forall t\in X (a\cdot t \in F(t) )\}.$$
  \end{enumerate}
The elements of a polyset structure are called {\em polysets}. If
$F$ is the constant function with value $Y$, we write
$X\arr Y$ instead of $\Prodt_{t\in X} Y$.
\end{definition}

\begin{example}\label{exapolyset}
\begin{enumerate}
\item We obtain the {\em full polyset structure\/} over the \weca\ $\cA$
if we take $\cP = \wp(\bA)$.
\item The {\em simple polyset structure\/} over the \weca\ $\cA$ is
obtained by taking $\cP = \{ \emptyset, \bA \}$. It is easily verified
that this is a polyset structure.
\item Given the \weca\ $\bLambda(C)$ as defined in Example \ref{exaweca}
(so $C$ is a set of constants), we define the {\em
polyset structure generated from $C$}, $\genC$ by
$$\genC := \{ X\subseteq \bLambda(C) \mid X=\emptyset \vee C \subseteq X
\}.$$ 
To show that $\genC$ is a polyset structure, we need to verify that it is closed under dependent product, which is easy: if
$X\neq\emptyset$ and
$F(t) \neq\emptyset$ for all $t\in X$, then $C\subseteq \Prodt_{t\in X}
F(t)$, because for $c\in C$ and $t\in X$, $ct =_c c \in C \subseteq
F(t)$.

\item Given the \weca\ $\cA$ and a set $C\subseteq \bA$ such that $\forall
a,b \in \bA (a \cdot b \in C \implies a \in C$, we define the {\em
power polyset structure of C\/} by 
$$\cP := \{ X \subseteq \bA \ |\  X\subseteq C \vee X = \bA \}.$$ To check
that this is a polyset structure, one only has to verify that,
for $X\in\cP$ and $F:X\arr\cP$,
$\Prodt_{t\in X} F(t) \in \cP$. This follows
from an easy case distinction: $\forall t\in X (F(t) = \bA)$ or
$\exists t\in X(F(t) \subseteq C)$.\\ An interesting instance of a power
polyset structure is the one arising from $C=\mbox{HNF}$, the set of
$\lambda$-terms with a head-normal-form, in the \weca\ $\Lambda/\beta$.
\end{enumerate}
\end{example}

To interpret kinds we need a {\em predicative structure}.

\begin{definition}\label{def.predstruct} For $\cP$ a polyset structure, the {\em predicative
    structure over\/} $\cP$ is the collection of sets $\cN$ defined 
inductively by
\begin{enumerate}
\item $\cP\in \cN$,
\item If $X\in \cP$ and $\forall t\in X(F(t) \in \cN$, then 
	$\Prodk_{t\in X} F(t) \in \cN$.
\end{enumerate}
If $F$ is a constant function with value $\cP$, we write $X\arr \cP$
in stead of $\Prodk_{t\in X} \cP$. 
\end{definition}

\begin{definition}\label{def.model}
If $\cA$ is a combinatory algebra, $\cP$ a polyset structure over
$\cA$, then we call $\langle \cA,\cP\rangle$ a {\em polyset model}.
\end{definition}

The predicative structure $\cN$ over a polyset structure $\cP$ could also be given as an additional parameter of a polyset model, if we take Definition \ref{def.predstruct} as a closure condition for $\cN$. However, we don't need that generality here. It is intended to
give a domain of interpretation for the kinds. For example, if the
type $\sigma$ is interpreted as the polyset $X$, then the kind $\sigma
\arr \sigma \arr \star$ is interpreted as $\Prodk_{t\in X}\Prodk_{q\in X}
\cP$, for which we usually write $X\arr X \arr \cP$.

We now define three interpretation functions, one for kinds, $\cV{-}$,
that maps kinds to elements of $\cN$, one for constructors (and
types), $\inte{-}$, that maps constructors to elements of $\bigcup \cN$ (and
types to elements of $\cP$, which is a subset of $\bigcup \cN$) and one for
terms, $\inty{-}$, that maps terms to elements of the combinatory
algebra $\cA$. All these interpretations are parameterized by
{\em valuations}, assigning values to the free variables (declared in
the context).

Let in the following $\cS= \langle \cA,\cP\rangle$ be a
polyset model: $\cA= \langle \bA,\cdot, \bk, \bs\rangle$ is
a combinatory algebra, $\cP$ is a polyset
structure over $\cA$ and $\cN$ is the predicative structure over the
polyset structure $\cP$.

\begin{definition} A {\em constructor variable valuation\/} is a map
  $\xi$ from $\Vc$ to $\bigcup \cN$. An {\em term variable
    valuation\/} is a map $\rho$ from $\Vo$ to $\bA$.
\end{definition}

\begin{definition} For $\rho$ an term variable
    valuation, we define the map $\intyr{-}^{\cS}$ from the set of
    terms to $\bA$ as 
    follows. (We leave the model $\cS$ implicit.)
    \begin{eqnarray*}
      \intyr{x} & :=  & \rho (x),\\
      \intyr{t q} & := & \intyr{t}\cdot\intyr{q}, \mbox{ if }q\mbox{
        is an term},\\
      \intyr{t Q} & := & \intyr{t}, \mbox{ if }Q\mbox{ is a
        constructor},\\
      \intyr{\lambda x\oftype \sigma . t} & := &\lamstar
      v. \intyrc{t}{\rho(x:=v)}, \mbox{ if }\sigma\mbox{ is a type},\\
      \intyr{\lambda \alpha\oftype A. t} & := & \intyr{t}, \mbox{ if
        }A\mbox{ is a kind}.
    \end{eqnarray*}
\end{definition}

\begin{definition} For $\rho$ an term variable
    valuation and $\xi$ a constructor variable
    valuation, we define the maps $\cVxr{-}^{\cS}$ and $\intexr{-}^{\cS}$
    respectively from kinds to $\cN$ and from constructors to $\bigcup
    \cN$ as follows. (We leave the model $\cS$ implicit.)
\begin{eqnarray*}
\cVxr{\star} &: = &\cP,\\
\cVxr{\Pi x \oftype \sigma . B} & := &
\Prodk_{t\in\intexr{\sigma}} \cVxrc{B}{\xi\rho(x:= t)},\\
\intexr{\alpha } & := & \xi (\alpha ),\\
\intexr{\Pi \alpha \oftype A . \tau} & := &
\bigcap _{a\in \cVxr{A}} \intexrc{\tau}{\xi(\alpha:=a)\rho}, \mbox{ if
  } A \mbox{ is a kind},\\
\intexr{\Pi x \oftype \sigma . \tau} & := &
\Prodt _{t\in\intexr{\sigma}}\intexrc{\tau}{\xi\rho(x:=t)}, \mbox{ if
  } \sigma \mbox{ is a type},\\
\intexr{P t} & := & \intexr{P}(\intyr{t}),\\
\intexr{\lambda x \oftype \sigma . P} & := & \llam t\in \intexr{\sigma}
.\intexrc{P}{\xi\rho(x:=t)}.
\end{eqnarray*}
\end{definition}



\begin{definition} For $\Gamma$ a $\lampt$-context, $\rho$ an term variable
    valuation and $\xi$ a constructor variable
    valuation, we say that $\xi,\rho$ {\em models\/} $\Gamma$,
    notation $\xi,\rho\models \Gamma$, if
    for all $x\in\Vo$ and $\alpha \in \Vc$,
      $x:\sigma \in \Gamma \Rightarrow  \rho(x)\in\intexr{\sigma}$ and
      $\alpha:A \in \Gamma \Rightarrow  \xi(\alpha)\in\cVxr{A}$.
\end{definition}

It is (implicit) in the definition that $\xi\rho\models \Gamma$ only
if for all declarations $x\oftype\sigma \in \Gamma$, $\intexr{\sigma}$ is
defined (and similarly for $\alpha \oftype A \in \Gamma$).

\begin{definition} The notion of {\em truth in a
  polyset model}, notation $\models^{\cS}$ and of {\em truth},
notation $\models$ are defined as follows. For
$\Gamma$ a context, $t$ an term, $\sigma$ a
  type, $P$ a constructor and $A$ a kind of $\lampt$,
  \begin{eqnarray*}
    \Gamma \models^{\cS} t:\sigma &\mbox{if}& \forall
    \xi,\rho[\xi,\rho\models \Gamma \Rightarrow
    \intyr{t}\in\intexr{\sigma}],\\
    \Gamma \models^{\cS} P:A &\mbox{if}& \forall
    \xi,\rho[\xi,\rho\models \Gamma \Rightarrow
    \intexr{P}\in\cVxr{A}].
  \end{eqnarray*}
Quantifying over the class of all polyset models, we define, for
$M$ an term or a constructor of $\lampt$,
$$    \Gamma \models M:T \;\mbox{ if }\; \Gamma \models^{\cS} M:T
    \mbox{ for all }\lampt\mbox{-models }\cS.$$
\end{definition}

Soundness states that if a judgment $\Gamma\vdash M:T$ is derivable,
 then it is true in all models. It is proved, by
 induction on the derivation in $\lampt$. As a matter of fact, Soundness also implies that $\cVxr{A}$ and $\intexr{P}$ 
are well-defined.

\begin{theorem}[Soundness] \label{thmsound}For $\Gamma$ a context, 
$M$ an term or a
  constructor and $T$ a type or a kind of $\lampt$,
$$    \Gamma \vdash M:T  \Rightarrow  \Gamma \models M:T.$$
\end{theorem}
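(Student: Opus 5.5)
We prove the statement by induction on the derivation of $\Gamma\vdash M:T$, for a fixed but arbitrary polyset model $\cS=\langle\cA,\cP\rangle$. Judgments whose type is $\Boxx$ we read as $\cVxr{A}\in\cN$. The induction hypothesis must be strengthened before it goes through. We prove simultaneously that whenever $\xi,\rho\models\Gamma$:
\begin{itemize}
\item the interpretations of $M$ and $T$ are defined;
\item for a term $M$ we have $\intyr{M}\in\intexr{T}$;
\item for a constructor $M$ we have $\intexr{M}\in\cVxr{T}$;
\item for a type $T$ we have $\intexr{T}\in\cP$;
\item for a kind $T$ we have $\cVxr{T}\in\cN$.
\end{itemize}

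Before the induction we need three auxiliary lemmas.

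\textbf{Variable independence.} The value of $\intyr{t}$, $\intexr{P}$ and $\cVxr{A}$ depends only on the values of $\xi,\rho$ on the free variables. This is a routine structural induction.

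\textbf{Substitution.} For a term $q$ we have $\intexr{P[x:=q]}=\intexrc{P}{\xi\rho(x:=\intyr{q})}$, and the analogous equations hold for $\intyr{-}$ and $\cVxr{-}$. For a constructor $Q$ we have $\intexr{P[\alpha:=Q]}=\intexrc{P}{\xi(\alpha:=\intexr{Q})\rho}$, and likewise for terms and kinds. Note that $\intyr{t[\alpha:=Q]}=\intyr{t}$, because type information is erased in the term interpretation. The proof is by induction on the structure of $P$ (resp.\ $t$, $A$). For $\lambda$-abstraction in terms it uses the defining property of $\lamstar$ together with weak extensionality, which guarantees that $\lamstar v.\,(-)$ respects pointwise-equal bodies.

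\textbf{Conversion.} If $M=_\beta N$ and both are well typed, then their interpretations coincide. It suffices to treat a single $\beta$-step, since $=_\beta$ is generated by steps and the well-typed terms are closed under reduction by subject reduction. The step case follows from the substitution lemma and combinatory completeness. For term redexes we get $(\lamstar v.\,\intyrc{t}{\rho(x:=v)})\cdot\intyr{q}=\intyrc{t}{\rho(x:=\intyr{q})}$. For constructor redexes we use set-theoretic function application of $\llam t\in X.\ldots$. Redexes of the form $(\lambda\alpha{:}A.t)Q$ are trivial by erasure. Contextual closure under $\lambda$ is exactly where weak extensionality is needed.

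For the conversion lemma we must also know that the interpretations are defined. Here we rely on the induction itself: we apply the lemma only to $T,U$ that are already known to be well typed.

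With these lemmas the derivation cases are straightforward.

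\textbf{(axiom).} $\cP\in\cN$ holds by definition.

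\textbf{(var), (weak).} If $\xi,\rho\models\Gamma,v{:}T$ then $\xi,\rho\models\Gamma$, and for the new variable the claim holds by the definition of $\models$. Independence of the fresh variable comes from the variable-independence lemma.

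\textbf{($\Pi$).} There are three subcases.
\begin{itemize}
\item $(\star,\star)$: use closure of $\cP$ under $\Prodt$. The function $t\mapsto\intexrc{T}{\xi\rho(x:=t)}$ maps $\intexr{U}$ into $\cP$ by the induction hypothesis, since $\xi,\rho(x{:=}t)\models\Gamma,x{:}U$.
\item $(\Boxx,\star)$: use closure of $\cP$ under arbitrary intersections. For $\intexr{U}=\emptyset$ the intersection over an empty index set is $\bA\in\cP$.
\item $(\star,\Boxx)$: use clause 2 of Definition \ref{def.predstruct}.
\end{itemize}

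\textbf{($\lambda$).} For term abstraction over a term, we have $\lamstar v.\ldots\cdot t=\intyrc{b}{\rho(x:=t)}\in\intexrc{T}{\xi\rho(x:=t)}$, so the abstraction lies in the $\Prodt$. For abstraction over a constructor variable, the interpretation is erased and membership in each set of the intersection follows from the induction hypothesis. For constructor $\lambda$, the value lands in $\Prodk$ directly.

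\textbf{(app).} This follows from the definitions of $\Prodt$, $\bigcap$ and $\Prodk$, combined with the substitution lemma to identify the interpretation of $T[x:=a]$.

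\textbf{(conv).} This follows from the conversion lemma.

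The main obstacle is the interplay between definedness, substitution and conversion. The interpretation is only partial: for instance, $\intexr{P}(\intyr{t})$ needs $\intyr{t}$ to lie in the domain. So the conversion lemma cannot be stated for raw pseudo-terms. The plan is to derive it from subject reduction and Church–Rosser for $\lampt$, which we take as known syntactic facts. Concretely, $T=_\beta U$ with both well typed gives a common well-typed reduct, and all intermediate terms are typable. The one-step lemma then applies along each path.
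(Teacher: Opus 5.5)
Your overall structure is what the paper intends: induction on the derivation, with an invariant that includes definedness, supported by substitution and conversion lemmas. The paper itself only says that soundness is proved by induction on the derivation. The weak point is how you discharge definedness in the conversion lemma, which as written is circular.

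Your strengthened hypothesis covers only the judgments occurring in the given derivation. In the (conv) case it tells you that $\intexr{T}$ and $\intexr{U}$ are defined. The intermediate terms on the paths to a common reduct, however, are typable only via subject reduction, and those typing derivations are not subderivations. So ``we rely on the induction itself'' says nothing about them. Yet contracting a constructor redex $(\lambda x{:}\sigma.P)\,q$ inside such a term requires $\intyr{q}\in\intexr{\sigma}$, which is soundness for $q$ itself.

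A related problem arises if the one-step lemma is stated only for well-typed terms and valuations with $\xi,\rho\models\Gamma$. Then under a term abstraction $\lambda x{:}\sigma.t$ you only learn that the bodies agree for $v\in\intexr{\sigma}$. Weak extensionality, however, needs agreement for all $v\in\bA$.

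Both issues go away if you remove typing from the conversion lemma. Since $\intyr{-}$ is total (it only sees the type erasure), prove $t\redb t'\Rightarrow\intyr{t}=\intyr{t'}$ for \emph{all} $\rho$; then weak extensionality applies under $\lambda$. For constructors, prove for arbitrary $\xi,\rho$: if $\intexr{P}$ is defined and $P\redb P'$, then $\intexr{P'}$ is defined and equal. Prove the analogous statement for kinds and $\cV{-}$.

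In the head case, use the substitution lemma in Kleene-equality form: one side is defined iff the other is, and then they are equal. Note also that definedness of $\intexr{(\lambda x{:}\sigma.P)\,q}$ already includes $\intyr{q}\in\intexr{\sigma}$. Definedness then propagates forward along reduction. So in the (conv) case Church--Rosser alone gives a common reduct $V$ with $\intexr{T}=\intexr{V}=\intexr{U}$, and you never need soundness for terms outside the derivation.
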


\begin{example}\label{examodels}
Let $\cA$ be a \weca.
\begin{enumerate}
\item The {\em full polyset model\/} over $\cA$ is $\cS=
\langle \cA,\cP\rangle$, where $\cP$ is the full polyset structure
over $\cA$ (as defined in Example \ref{exapolyset} (1)). 
\item The {\em simple polyset model\/} over $\cA$ is $\cS=
\langle \cA,\cP\rangle$, where $\cP$ is the simple polyset structure
over $\cA$. (As defined in Example \ref{exapolyset} (2), so $\cP = \{ \emptyset, \bA \}$.)
\item The simple polyset model over the degenerate $\cA$ is also
called the {\em proof-irrelevance model\/} or {\em PI-model} for $\lampt$.
\item For $C$ a set of constants, the {\em
polyset model generated from $C$\/} is defined by $\cS= \langle
\bLambda(C),\genC\rangle$, where $\genC$ is the polyset structure
generated from $C$. See Example \ref{exapolyset} (3) and Example \ref{exaweca} (2).
\end{enumerate}
\end{example}

\section{Non-derivability results in $\lampt$}\label{sec.nonderiv}
We will not discuss generalities of our polyset models (see
\cite{NoInduction} for that), but instead we will show that certain
type constructions are not definable in $\lampt$ by constructing a concrete
counter-model.

In logic, validity of a formula $\phi$ means that the
interpretation of $\phi$ is true in the model. In type theory, we call
a type {\em valid\/} if its interpretation is nonempty, i.e.\ the type
is inhabited.

\begin{definition} For $\cS$ a polyset model, $\Gamma$ a context,
  $\sigma$ a type in $\Gamma$ and $\xi,\rho$ valuations such that
  $\xi,\rho\models \Gamma$, we say that {\em $\sigma$ is valid in $\cS$
  under $\xi,\rho$}, notation $\cS, \xi,\rho \Models^{\lampt}\sigma$, if
$$ \intexr{\sigma}^{\cS} \neq \emptyset.$$
We will often just write $\Models^{\lampt}\sigma$, in particular if the model $\cS$ is clear from the context and the type $\sigma$ is closed. 
\end{definition}

So we will sometimes just say that ``$\sigma$ holds'' when we mean
that the type $\sigma$ is inhabited in the type theory, and similarly
we will say that ``$\sigma$ holds in the model'' when we mean that the
interpretation of $\sigma$ is non-empty in the model.

We recap one important
Lemma about polyset models that we will use further on.

\begin{lemma} (Basically Lemma 4 of \cite{NoInduction}.) \label{lem.eq-in-model}
Let $\sigma$ be a type with terms $t, q :\sigma$, possibly in some
context.
\begin{itemize}
    \item For every polyset model $\cS$ we have
$$\cS,\xi,\rho \Models t=_{\sigma} q \quad\Longleftrightarrow\quad \intyr{t}= \intyr{q},$$
 where the equality on the right is the equality in the \weca\ of the model.
\item $\Gamma \vdash M : t=_{\sigma} q$ implies that for every polyset model $\cS$ and every $\xi,\rho$, if
$\xi,\rho \models \Gamma$, then $\intyr{t}= \intyr{q}$.
\end{itemize}
\end{lemma}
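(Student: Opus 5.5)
We prove the two items of Lemma~\ref{lem.eq-in-model} by unfolding Leibniz equality in the model. By definition, $t=_\sigma q$ is $\Pi P\oftype\sigma\arr\star.\ P\ t\arr P\ q$. Its interpretation is
$$\intexr{t=_\sigma q} \;=\; \bigcap_{F\in \intexr{\sigma}\arr\cP} \bigl(F(\intyr{t}) \arr F(\intyr{q})\bigr),$$
where $\intexr{\sigma}\arr\cP$ is the element $\Prodk_{a\in\intexr{\sigma}}\cP$ of $\cN$, i.e.\ the set of all functions from the polyset $X:=\intexr{\sigma}$ into $\cP$. Here we use that $\intexr{P\,t}=\xi(P)(\intyr{t})$, and that, by soundness (Theorem~\ref{thmsound}), $\intyr{t},\intyr{q}\in X$ whenever $\xi,\rho$ model the context.

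For the direction $(\Leftarrow)$ of the first item, assume $\intyr{t}=\intyr{q}$. Then each component of the intersection is $Y\arr Y$ for the polyset $Y=F(\intyr{t})$. This set always contains $\lamstar v.v$, which is the interpretation of $\lambda x.x$ and acts as the identity in the combinatory algebra. So the intersection contains $\lamstar v.v$ and is nonempty.

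For $(\Rightarrow)$, suppose $\intyr{t}\neq\intyr{q}$ and choose a predicate that separates them. Define $F:X\arr\cP$ by $F(\intyr{t}) := \bA$ and $F(a):=\emptyset$ for every $a\neq\intyr{t}$. This $F$ is a legitimate element of $X\arr\cP$: the kind $\sigma\arr\star$ ranges over \emph{all} set-theoretic functions into $\cP$, and both $\bA$ and $\emptyset$ belong to every polyset structure by Definition~\ref{def.polyset}(1,2). The component of the intersection at this $F$ is then $\bA\arr\emptyset$, which is empty because $\bA$ is nonempty (it contains $\bk$). Hence the whole intersection is empty, and $t=_\sigma q$ is not valid.

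This step is the crux of the proof. It works only because kinds are interpreted as full function spaces in the predicative structure and because $\emptyset\in\cP$; that is exactly why the paper adds condition 2 to Definition~\ref{def.polyset}. In the degenerate case where $\bA$ is a singleton, the hypothesis $\intyr{t}\neq\intyr{q}$ is impossible, so the direction holds vacuously.

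The second item follows from the first together with soundness. If $\Gamma\vdash M: t=_\sigma q$ and $\xi,\rho\models\Gamma$, Theorem~\ref{thmsound} gives $\intyr{M}\in\intexr{t=_\sigma q}$. So the interpretation of $t=_\sigma q$ is nonempty, and the first item yields $\intyr{t}=\intyr{q}$.
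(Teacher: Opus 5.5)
Your proof is correct and matches the paper: for the first item the paper only cites Lemma 4 of \cite{NoInduction}, whose standard argument is your separating predicate $F(\intyr{t})=\bA$, $F(a)=\emptyset$ otherwise, and for the second item it uses soundness exactly as you do. You also rightly note that the $(\Rightarrow)$ direction needs $\emptyset\in\cP$, which this paper builds into Definition~\ref{def.polyset} (the cited work only calls such models ``consistent'').
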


\begin{proof}
  The first is Lemma 4 of \cite{NoInduction}. The second follows immediately from that.
\end{proof}

\weg{
  To prove the non-definability of quotients in $\lampt$, we
consider a specific type $\sigma $ and relation $R$, and then we construct a
polyset model $\cS$ in which $\IndQuot$ is not valid (for
whatever quotient type of $\sigma $ modulo $R$ as in Definition
\ref{def.quotient}).
}

\begin{theorem}\label{thm.nocoind}
The definable streams in $\lampt$ (as given in Example \ref{exa.streams}) do not have a coinduction principle.
\end{theorem}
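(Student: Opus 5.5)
Since a closed inhabitant of the coinduction principle would, by soundness (Theorem~\ref{thmsound}), give a nonempty interpretation in every polyset model, it suffices to exhibit one polyset model in which the coinduction type
$$\Pi s,t : \Stream.\ s \bisim t \rightarrow s = t$$
is interpreted as the empty set. We take the full polyset model over the \weca\ $\bLambda$ (open $\lambda$-terms modulo $\beta$), possibly with some constants added. By Lemma~\ref{lem.eq-in-model}, $s = t$ is nonempty under a valuation exactly when the interpretations of $s$ and $t$ are equal in the \weca. So it is enough to find two closed stream terms $s_0,s_1$ with $\inty{s_0}\neq\inty{s_1}$ (distinct $\beta$-classes) such that $s_0\bisim s_1$ is valid in the model. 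The dependent product then contains no realizer: any realizer $a$ would give $a\cdot\inty{s_0}\cdot\inty{s_1}\cdot b\in \inte{s_0=s_1}=\emptyset$ for some $b\in\inte{s_0\bisim s_1}$.

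For the streams we take two different representations of the constant stream of $\true$ (over $\sigma:=\bool$). Let $s_0:=\CoRecStream\ \bool\ (\lambda x\oftype\bool.\ x)\ (\lambda x\oftype\bool.\ x)\ \true$ and $s_1:=\CoRecStream\ \nat\ (\lambda n\oftype\nat.\ \true)\ \suc\ \zero$. Their interpretations are the $\beta$-normal forms $\lambda k.\ k\,\inty{\true}\,\inty{\tup{\mathrm{id},\mathrm{id}}}\ldots$ and $\lambda k.\ k\,\inty{\zero}\ldots$ respectively. These are distinct normal forms, hence distinct in $\bLambda$ by Church--Rosser.

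Next we must show that $s_0\bisim s_1$ is inhabited in the model. Since this is a second-order existential, it suffices to choose a value for $R$ in $\inte{\Stream}\arr\inte{\Stream}\arr\cP$ and show that the body is nonempty. We choose, semantically,
$$R(u,v):=\bA \mbox{ if } u,v\in E,\qquad R(u,v):=\emptyset \mbox{ otherwise},$$
where $E\subseteq\inte{\Stream}$ is the set of realizers $u$ such that $\inty{\hd}\cdot u=\inty{\true}$ and $\inty{\tl}\cdot u\in E$. We take $E$ to be the greatest such set, defined as a coinductive fixed point in $\wp(\bA)$. This is legitimate in the full model, where every subset is a polyset.

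We then verify the following:
\begin{itemize}
\item $s_0,s_1\in E$, by computing $\hd$ and iterated $\tl$ on them. This gives $\true$ each time, e.g.\ $\tl^n s_1 \betar \CoRecStream\ \nat\ (\lambda n.\true)\ \suc\ (\suc^n\zero)$.
\item $R$ is symmetric.
\item If $R(u,v)$ is nonempty, then $\hd\,u=\hd\,v$ holds in the model, since both equal $\inty{\true}$ and Lemma~\ref{lem.eq-in-model} applies. Moreover $R(\tl\,u,\tl\,v)=\bA$.
\end{itemize}
Each of these conjuncts (Leibniz equalities and $\Pi$'s over nonempty sets) then contains a realizer such as $\lamstar x.x$ or a constant combinator, so the body of the existential is nonempty. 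Since the interpretation of $\exists R.\ldots$ is the intersection over all $\beta$ of $\big(\bigcap_R(\ldots\arr\beta)\big)\arr\beta$, nonemptiness for one witness $R$ yields a uniform realizer $\lamstar k.\,k\,c$.

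The main obstacle is this last step: checking carefully that the interpretation of the bisimulation predicate, including symmetry and the head and tail clauses with Leibniz equality, is nonempty for our semantic $R$. This requires handling the product and pattern-abstraction encodings inside realizers uniformly, so that a single realizer works regardless of the intersection over types. If the full model causes trouble, we would fall back to a simple polyset model $\{\emptyset,\bA\}$ over $\bLambda$. There every type is empty or everything, $R:=$ constantly $\bA$ is a bisimulation (the required Leibniz equality of heads is where it can fail), and we would adjust the choice of $s_0,s_1$ to have literally equal heads.
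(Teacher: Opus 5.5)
Your proof is correct, but it takes a different route from the paper's. The paper works mostly syntactically. It picks $s_1 := \pack\ \bool\ \tup{\true,\neg,\neg}$ and $s_2 := \pack\ \bool\ \tup{\false,\sI,\neg}$, whose tail-orbits have period two ($\tl\,(\tl\,s_i) \betar s_i$). A bisimulation can then be written \emph{inside} $\lampt$ as a finite disjunction of Leibniz equalities and checked by conversion alone, so $s_1 \bisim s_2$ is \emph{derivable}. On the other hand, $s_1 =_{\StreamB} s_2$ is not derivable, since $\inty{s_1}$ and $\inty{s_2}$ are distinct normal forms (via Lemma~\ref{lem.eq-in-model}, or purely syntactically).

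You instead need $s_0 \bisim s_1$ only to be \emph{valid} in one model. This lets you use an arbitrary semantic witness $R$ instead of a proof term, and soundness plus Lemma~\ref{lem.eq-in-model} then make the whole coinduction type empty. The paper's version gives more information: a concrete pair of closed streams that are provably bisimilar but not provably equal. So already this single instance of coinduction fails, and adding coinduction would identify two distinct normal forms. The cost is that it is limited to streams whose bisimilarity can be checked in $\lampt$ without any induction. Your semantic version is more flexible in the choice of streams and fits the counter-model method used elsewhere in the paper.

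The step you flag as the main obstacle is harmless. Because $R$ takes only the values $\emptyset$ and $\bA$, and all second-order abstractions are erased, you get uniform realizers: symmetry is realized by $\lamstar u\,v\,r.\,r$, the head/tail clause by $\lamstar u\,v\,r.\,p$ with $p$ the encoded pair of two copies of $\lamstar z.z$, and the existential by $\lamstar k.\,k\,c$.

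Two small corrections.
\begin{itemize}
\item $\inty{\true}$ and $\inty{\zero}$ are the same combinator $\bK$. So $\inty{s_0} \neq \inty{s_1}$ comes from the other components ($\lambda x.x$ versus $\lambda n.\bK$, and $\lambda x.x$ versus $\inty{\suc}$), not from the first one. Your conclusion is unaffected.
\item $E$ never has to be a polyset; only the values of $R$ must lie in $\cP$. The explicit orbit $\{\inty{\tl^n s_0}, \inty{\tl^n s_1} \mid n\in\IN\}$ already works as $E$, with no greatest fixed point needed. Hence your argument goes through verbatim in the simple polyset model over $\bLambda$ as well. The fallback with a constant $R$ (which indeed fails) is unnecessary, and your streams already have literally equal heads.
\end{itemize}
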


\begin{proof} We consider streams over $\bool$ and we define two streams $s_1, s_2 : \StreamB$ such that $s_1 \bisim s_2$ but not $s_1=_{\StreamB} s_2$. Define
  \begin{eqnarray*}
    s_1 &:=& \pack\  \bool\ \tup{\true, \neg, \neg}\\
    s_2 &:=& \pack\  \bool\ \tup{\false,\sI, \neg},
  \end{eqnarray*}
  where $\sI$ is the identity on $\bool$ and $\true$ and $\false$ are
  as in Example \ref{exa.natbool} and $\neg := \lambda
  b:\bool.\ b\ \bool\ \false\ \true$. Note that $\hd\ s_1
  = \neg\ \true =\false = \hd\ s_2$ and similarly $\hd(\tl\ s_1) = \neg\ (\neg\ \true) =\true$ and
  $\hd(\tl\ s_2) = \sI\ (\neg\ \false) = \true$: all finite observations from $s_1$ and $s_2$ are the same.

  We show that (1) $s_1 \bisim s_2$ but (2) not $s_1=_{\StreamB} s_2$.

  For (1): Define, for $x,y:\StreamB$, $R\ x\ y := (x =s_1\wedge
  y=s_2)\vee(y =s_1\wedge x=s_2)\vee(x =\tl\ s_1\wedge
  y=\tl\ s_2)\vee(y =\tl\ s_1\wedge x=\tl\ s_2)$. Then $R$ is
  symmetric and if $R\ x\ y$, then $\hd\ x=\hd\ y$, as we have already observed. Moreover, if $R\ x\ y$, then $R\ (\tl\ x)\ (\tl\ y)$ because $\tl\ (\tl,s_1) = s_1$ and $\tl\ (\tl,s_2) = s_2$.

  For (2): Suppose $s_1 =_{\StreamB} s_2$. Then $\inty{s_1}= \inty{s_1}$ in
  any model (Lemma \ref{lem.eq-in-model}), so if we consider the \weca\ of
  untyped $\lambda$-terms modulo $\beta\eta$, we see that
  $\inty{s_1}=_{\beta\eta} \inty{s_1}$. Now, if we fully spell out
  $\pack$, we have $\pack\  \sigma\ \tup{a, h, t} = \lambda
  Y:\U.\lambda k: \Pi X:\U. X \to (X \to \bool) \to (X \to X) \to
  Y. k\ \sigma\ a\ h\ t$, so we find that, in $\lambda$-calculus,
  $\inty{s_1}= \lambda k.\ k\ \bK\ N\ N = \lambda k.\ k\ \bKs\ \bI\ N =
  \inty{s_2}$\footnote{$\inty{\true} = \lambda x\ y.\ x = \bK$ and $\inty{\false} =\lambda x\ y.\ y = \bKs$}, where $N = \inty{\neg} = \lambda b.\ b
  \ \bKs\ \bK$. But $\lambda k. k\ \bK\ N\ N$ and $\lambda
  k. k\ \bKs\ \bI\ N$ are two distinct normal forms, so they are not
  $\beta\eta$-equal. Contradiction, so we conclude that not $s_1
  =_{\StreamB} s_2$\footnote{For (2), we do not really need the model, because we know that in $\lampt$, if $\vdash M: 
  s_1 =_{\StreamB} s_2$ (for some $M$),  then $s_1 =_{\beta} s_2$, which is not the case because $s_1$ and $s_2$ are different normal forms.}.
\end{proof}

\begin{theorem}\label{thm.no-quotients}
Parametric quotients (of Definition \ref{def.quotient}) are not definable in $\lampt$.
\end{theorem}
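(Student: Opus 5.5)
The plan is to argue by contradiction, exploiting the fact that in the polyset models of Section 3 all constructor abstractions and applications are erased by $\intyr{-}$. Because of this, the single polymorphic term $\cls : \Pi \alpha:\U.\ \Pi R:\alpha\to\alpha\to\U.\ \alpha \to \alpha/R$ is interpreted as one fixed element $c := \inty{\cls}$ of the \weca, independent of $\sigma$ and $R$. Parametricity of $\cls$ then lets us transport an identification made for one relation to a quotient by a different relation, where that identification is impossible.

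Concretely, suppose a parametric quotient as in Definition \ref{def.quotient} exists. I work in the full polyset model over the \weca\ $\bLambda$ of $\lambda$-terms modulo $\beta$, and I take $\sigma := \bool$. Two relations on $\bool$ will be used.
\begin{enumerate}
\item The total relation $R_1 := \lambda x, y:\bool.\ \Pi \alpha:\U.\ \alpha\to\alpha$, which is closed and provably holds for all $x,y$ via $\lambda \alpha.\lambda z.z$.
\item Leibniz equality $R_2 := \lambda x,y:\bool.\ x =_{\bool} y$.
\end{enumerate}
By item 3 applied to $R_1$, there is a closed term of type $\cls\ \bool\ R_1\ \true =_{\bool/R_1} \cls\ \bool\ R_1\ \false$. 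By Lemma \ref{lem.eq-in-model}, its two sides then have equal interpretations. Since $\intyr{t\,Q} = \intyr{t}$ for constructors $Q$, this equality reads $c\cdot \bK = c\cdot \bKs$ in $\bLambda$, using $\inty{\true}=\bK$ and $\inty{\false}=\bKs$.

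Next I use items 4 and 5 for $R_2$ with $\tau := \bool$ and $f := \lambda b:\bool.\ b$. This $f$ respects $R_2$: given $q : x =_{\bool} y$, the term $q\ (\lambda z:\bool.\ f\ x =_{\bool} f\ z)\ (\lambda P.\lambda h.h)$ has type $f\ x =_{\bool} f\ y$, so some closed $p$ exists. Item 5 gives $\widehat{f}\,(\cls\ \bool\ R_2\ \true) =_\beta \true$ and $\widehat{f}\,(\cls\ \bool\ R_2\ \false) =_\beta \false$. By soundness of the conversion rule, $\beta$-equal terms have equal interpretations. Writing $h := \inty{\widehat{f}}$, we get $h\cdot(c\cdot\bK) = \bK$ and $h\cdot(c\cdot\bKs) = \bKs$. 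Combined with $c\cdot\bK = c\cdot\bKs$, this yields $\bK =_\beta \bKs$. That is false, since $\lambda x\,y.x$ and $\lambda x\,y.y$ are distinct $\beta$-normal forms. So no parametric quotient exists.

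The main obstacle is justifying that the same element $c$ occurs in both instantiations. This needs $\cls$ to be a single closed term whose type and relation arguments are erased by $\intyr{-}$, which is exactly the parametricity requirement in item 2; for the non-parametric quotient of Example \ref{exa.trivquotient} the argument would fail. I would double-check this against the clauses $\intyr{t\,Q}=\intyr{t}$ and $\intyr{\lambda\alpha{:}A.\,t}=\intyr{t}$. I would also check that $\widehat{f}$ is a closed term, or at least has a fixed interpretation under some valuation satisfying the empty context, so that the second use of soundness is legitimate.
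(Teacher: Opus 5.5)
Your proof is correct and follows the paper's argument. Both use that $\inty{\cls}$ is a single element because type and relation arguments are erased, identify its values at two points via the total relation, and then derive $\bK=\bKs$ in $\bLambda$ from $\widehat{f}\,(\cls\ x)=_\beta x$ for the Leibniz-equality quotient with $f$ the identity. The only difference is cosmetic: the paper works over a type variable interpreted as all of $\bLambda$ to show $\inty{\cls}$ is constant everywhere, whereas you instantiate directly at $\bool$ with closed terms and need only $\inty{\cls}\cdot\bK=\inty{\cls}\cdot\bKs$.
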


\begin{proof}
  We can show that the combination of the first five items in
  Definition \ref{def.quotient} cannot be realized in $\lampt$. Suppose we have a term
  $\cls : \Pi \alpha:\U. \Pi R:\alpha\to \alpha\to \U. \alpha \to
  \alpha/R$, such that, for all $\sigma:\U$, the type $\Pi x,y:\sigma
  .\ R\ x\ y \to \cls\ \sigma \ R\  x =_{\sigma /R} \cls\ \sigma\ R\ 
  y$ is inhabited. For $\alpha$ a type variable and taking $\lambda
  x,y:\alpha.\ \top$ for $R$ (where $\top := \Pi
  \beta:\U. \beta\to\beta$, the type 'True'), we see that there is a
  term $t$ with
$$\alpha :\U, x,y:\alpha \vdash t : \cls\ \alpha\ R\  x =_{\alpha/R} \cls\ \alpha\ R\  y.$$
  In a model built from the \weca\ $\bLambda$, taking $\intexr{\alpha} = \bLambda$, we see that for every $M, N\in \bLambda$, $\inty{\cls}\ M =_{\beta} \inty{\cls}\ N$. (This uses Lemma \ref{lem.eq-in-model}.) So $\inty{\cls}$ is a constant function in $\bLambda$. But this is not the case: if we take, e.g., $\sigma:=\bool$, $R\ x\ y := x=_{\bool} y$ and $f:= \sI_{\bool}$, then $\widehat{f} :\bool \to \bool$ with $\widehat{f}\ (\cls \ x) = x$ yields $\bK =\bKs$ in the model $\bLambda$, which is false. (If $\inty{\cls}$ is constant in the model $\bLambda$, then we have, in the model: $\true = \sI\ \true = \widehat{\sI}\ (\cls \ \true) = \widehat{\sI}\ (\cls \ \false) =  \sI\ \false = \false$.
\end{proof}

\section{Models of extensions of $\lampt$}\label{sec.extensions}
We now extend the polyset model of Definition \ref{def.model} to
incorporate identity types and $\Sigma$-types. We will show that the rules for identity
types and $\Sigma$-types of Definition \ref{def.extensions} are sound in the model, and
that $\UIP$ holds. We will also show a model in which $\funext$ does
not hold and neither does induction for natural numbers, indicating that
$\funext$ is crucial for the construction in \cite{encodings}.

\begin{definition} \label{def.lambda-id}
  We define $\bLambdaid$ by extending the untyped $\lambda$-calculus with two constants $\Jcomb$ and $\refl$ with the following reduction rules.
  $$\begin{array}{rclccrcl}
    \Jcomb \ c\ a\ b\ \refl &\redJ & c\  a &\qquad&     \refl\  q &\redrefl & \refl,\\
    \pr1\ \tup{a, b} &\redsig& a &&     \pr1\ \refl&\redrefl & \refl,\\
  \pr2\ \tup{a, b} &\redsig& b && \pr2\ \refl&\redrefl & \refl.
  \end{array}$$
\end{definition}

The rule for $\redJ$ represents the reduction rule for identity types that combines the $\Jcomb$ eliminator and the $\refl$ constructor, the rules for $\redsig$ represent the $\beta$-rules for $\Sigma$-types.
The rules for $\redrefl$ does not occur in a typed setting, as $\refl\ M$ and $\pr1\ \refl$ are never well-typed, but it is harmless, in the sense that it doesn't spoil the Church-Rosser property (see \cite{Mitschke77,KlopPhD}) and it allows us to define an interesting model.

The extension of the interpretation to $\Sigma$-types can be done by
extending the notion of polyset-structure (Definition
\ref{def.polyset}) to incorporate dependent sums in a straightforward
way. However, we are particularly interested in constructing a
counter-model, and this will be based on $\bLambdaid$ and the the
concrete polyset structure $\genidrefl$. So we will define the
concrete interpretation of $\Sigma$-types into $\genidrefl$.

\begin{definition}\label{def.sigmainterp}
  We take $\cP$ to be $\genidrefl$.
  For $X\in \cP$, we say that $F :X\rightarrow \cP$ is {\em proper\/} if $F(\refl)  =\emptyset \rightarrow \forall t\in X(F(t) =\emptyset)$.

  For  $X\in \cP$ and $F :X\rightarrow \cP$ proper, we define the {\em dependent sum\/} $\Sigma_{t\in X}. F(t)$ as follows.
$$\Sigma_{t\in X}. F(t) := \{ q \mid \pr1 \ q \in X \wedge \pr2\ q \in F(\pr1\ q)\}.$$ 
\end{definition}
Due to our choice of reduction rules for $\bLambdaid$ and the fact
that we limit to proper $F: X\rightarrow \cP$, we can see that
$\genidrefl$ is closed under dependent sums.

\begin{lemma}
    We take $\cP$ to be $\genidrefl$.
  For  $X\in \cP$ and $F :X\rightarrow \cP$ proper, the dependent sum $\Sigma_{t\in X}. F(t)$ is also in $\cP$.
\end{lemma}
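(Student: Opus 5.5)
The polyset structure $\genidrefl$ consists, by analogy with Example \ref{exapolyset}(3), of the sets $Y\subseteq\bLambdaid$ with $Y=\emptyset$ or $\refl\in Y$. So to show $\Sigma_{t\in X}.F(t)\in\cP$ I only need a case distinction: either the dependent sum is empty, or it contains $\refl$. The whole argument rests on the reduction rules $\pr1\ \refl\redrefl\refl$ and $\pr2\ \refl\redrefl\refl$ of Definition \ref{def.lambda-id}, together with properness of $F$.

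First suppose $X=\emptyset$. Then no $q$ has $\pr1\ q\in X$, so the dependent sum is $\emptyset\in\cP$. Next suppose $X\neq\emptyset$, so $\refl\in X$. If $F(\refl)=\emptyset$, properness gives $F(t)=\emptyset$ for all $t\in X$. Then the condition $\pr2\ q\in F(\pr1\ q)$ can never hold whenever $\pr1\ q\in X$, so the sum is again empty.

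The remaining case is $\refl\in X$ and $F(\refl)\neq\emptyset$, hence $\refl\in F(\refl)$. Take $q:=\refl$. In $\bLambdaid$ we have $\pr1\ \refl=\refl\in X$, and $\pr2\ \refl=\refl\in F(\refl)=F(\pr1\ \refl)$. So $\refl\in\Sigma_{t\in X}.F(t)$, and the sum lies in $\cP$.

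The only delicate point is one of well-definedness. Elements of $\bLambdaid$ are equivalence classes modulo $\beta$, $J$, $\Sigma$ and $\refl$ conversion, and $\pr1,\pr2$ are not primitive constants. I would read $\pr1\ q$ and $\pr2\ q$ as application of $q$ to fixed combinators, for instance the usual Church selectors, and read $\refl\ q\redrefl\refl$ as making $\pr i\ \refl=\refl$ hold by conversion, as the definition intends. Membership then respects conversion, because $X$ and $F(\pr1\ q)$ are sets of classes and $F$ is a function on classes. The remaining requirement is that $\refl\neq\emptyset$-type collapse does not occur, i.e.\ that the conversion is consistent. This follows from the Church--Rosser property cited after Definition \ref{def.lambda-id}.
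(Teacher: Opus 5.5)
Your proof is correct and follows the paper's argument: you split on whether $X=\emptyset$, then on whether $F(\refl)$ is empty (using properness in the empty case), and $\refl$ witnesses non-emptiness since $\pr1\ \refl=\refl$ and $\pr2\ \refl=\refl$ in $\bLambdaid$, a step the paper leaves implicit. Your closing consistency concern is not needed for this lemma, since the paper takes $\pr{i}\ \refl\redrefl\refl$ as primitive reductions and the closure argument never uses Church--Rosser; it is harmless, though.
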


\begin{proof}
  Let $X\in \cP$ and $F :X\rightarrow \cP$ proper. We need to show
  that $\Sigma_{t\in X}. F(t) =\emptyset$ or $\refl \in \Sigma_{t\in
    X}. F(t)$. Case $X=\emptyset$. Then $\Sigma_{t\in X}. F(t)
  =\emptyset$. Case $X\neq\emptyset$. Subcase $F(\refl) \neq
  \emptyset$. Then $\refl \in X$ and $\refl \in F(\refl)$, so $\refl
  \in \Sigma_{t\in X}. F(t)$. Subcase $F(\refl) = \emptyset$, then
  $\forall t\in X(F(t) =\emptyset)$, so $\Sigma_{t\in X}. F(t)
  =\emptyset$.
\end{proof}

The interpretation of $\Sigma$-types in a polyset structure is as follows.
$$\intexr{\Sigma x \oftype \sigma . \tau}  := 
\Sigma_{t\in\intexr{\sigma}}\intexrc{\tau}{\xi\rho(x:=t)}.$$

The interpretation of terms involving pairing and projection is
straightforward and so is the soundness of the derivation rules for
$\Sigma$-types, which we state without proof.

\begin{lemma}\label{lem.soundsig}
The polyset model $\genidrefl$ is sound for $\lampt^{\Sigma}$, that is, $\lampt$ extended with the
rules for $\Sigma$-types.
\end{lemma}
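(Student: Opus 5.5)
The plan is to extend the soundness proof of Theorem \ref{thmsound} by induction on derivations in $\lampt^{\Sigma}$. We add three ingredients. First, the term interpretation: $\intyr{\tup{a,b}}$ is the Church pair $\lamstar z.\ z\cdot\intyr{a}\cdot\intyr{b}$. Second, $\pr1$ and $\pr2$ are interpreted by the corresponding elements of $\bLambdaid$, so that $\pr{i}\ \tup{a,b}\redsig$ the right component and $\pr{i}\ \refl\redrefl\refl$. Third, the interpretation of $\Sigma x\oftype\sigma.\tau$ as $\Sigma_{t\in\intexr{\sigma}}\intexrc{\tau}{\xi\rho(x:=t)}$, as given above. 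The induction hypothesis is strengthened in the usual way: every well-typed kind, constructor and term has a well-defined interpretation, and this interpretation respects substitution and $\beta\Sigma$-conversion.

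For the $\Sigma$ formation rule we must show that the interpretation lies in $\cP$. By the preceding lemma it suffices to check that $F(t):=\intexrc{\tau}{\xi\rho(x:=t)}$ is proper. Since $\cP=\genidrefl$, each $F(t)$ is either $\emptyset$ or contains $\refl$. The key claim is that for a type $\tau$ in context $\Gamma,x\oftype\sigma$, emptiness of $\intexrc{\tau}{\xi\rho(x:=t)}$ does not depend on $t\in\intexr{\sigma}$.

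I would prove this by an auxiliary induction on the structure of types and constructors: whether the interpretation is $\emptyset$ is determined by $\xi$ alone, not by $\rho$. The cases go as follows.
\begin{itemize}
\item For $\Pi$: $\Prodt_{t\in X}G(t)$ is empty iff $X\neq\emptyset$ and some $G(t)=\emptyset$. This is because if all $G(t)\neq\emptyset$ then $\refl\in\Prodt$, using $\refl\ q\redrefl\refl$.
\item For $\bigcap$: the intersection is empty iff some member is empty, because nonempty members all contain $\refl$.
\item For $\Sigma$: the computation in the preceding lemma shows emptiness is decided by $X$ and $F(\refl)$.
\item For a constructor variable applied to terms, $\xi(\alpha)(\ldots)$: here the value genuinely depends on $\rho$, which is the main obstacle.
\end{itemize}
To handle the last case I would restrict $\cN$ to the \emph{proper} constructor interpretations, namely functions whose emptiness is constant in the term arguments. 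All definable constructors $\llam t.\intexrc{P}{\ldots}$ are proper by the same induction, and $\xi$ ranges over proper ones in the intended model. Under this restriction the emptiness pattern is independent of $\rho$, and properness of $F$ follows.

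The term rules are then routine. For pairing, $\pr1\cdot\intyr{\tup{a,b}}=\intyr{a}\in\intexr{\sigma}$ and $\pr2\cdot\intyr{\tup{a,b}}=\intyr{b}\in\intexrc{\tau}{\xi\rho(x:=\intyr{a})}$. The latter equals the interpretation of $\tau[x:=a]$ by the substitution lemma, so the pair lies in the $\Sigma$-set. For the projections, membership $q\in\Sigma_{t\in X}F(t)$ gives $\pr1\ q\in X$ and $\pr2\ q\in F(\pr1\ q)$ directly by definition. The equality rules are sound because the $\redsig$ rules hold in $\bLambdaid$, which is a \weca\ by Church--Rosser of the combined reduction. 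Conversion soundness therefore extends to $\beta\Sigma$-equality.
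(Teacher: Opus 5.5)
You correctly identify what the paper passes over: the paper only says the lemma is straightforward and gives no proof. It relies on the preceding closure lemma for \emph{proper} $F$, but never checks that families $t\mapsto\intexrc{\tau}{\xi\rho(x:=t)}$ coming from types are proper. The gap is in how you secure properness. You restrict $\cN$ and assert that $\xi$ ranges over constant-emptiness interpretations ``in the intended model''. That is not the model in the statement. In $\genidrefl$ with its predicative structure, $\cV{\sigma\to\star}$ consists of \emph{all} functions $\intexr{\sigma}\to\cP$, and there your key claim (emptiness independent of $\rho$) is false. By Lemma~\ref{lem.eq-in-model}, $\intexrc{x=_{\bool}\true}{\xi\rho(x:=t)}$ is nonempty iff $t=\bK$, because the intersection ranges over predicates with $P(t)=\bA$ and $P(\bK)=\emptyset$. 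Also $\refl\in\intexr{\bool}$ and $\refl\neq\bK$. So for $S:=\Sigma x\oftype\bool.\ x=_{\bool}\true$ the family is empty at $\refl$ but not at $\bK$. The set $\{q\mid \pr{1}\ q\in\intexr{\bool}\wedge\pr{2}\ q\in F(\pr{1}\ q)\}$ then contains $\tup{\bK,\bI}$ but not $\refl$, so it is not in $\genidrefl$, and $\vdash S:\star$ is not validated. This is more than a definedness quibble. The judgment $y\oftype\Pi\alpha\oftype\star.\ \alpha\to\alpha\vdash y\ S\ \tup{\true,\lambda P\oftype\bool\to\star.\ \lambda h\oftype P\ \true.\ h}:S$ is derivable, the valuation $\rho(y):=\refl$ is admissible, and the term is interpreted as $\refl\cdot\tup{\bK,\bI}=\refl\notin\intexr{S}$. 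So for the model as defined, the statement cannot be proved by any route; it needs repair.

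Your restricted structure is therefore a genuinely different model, and the restriction removes exactly what the model is used for. If predicates have constant emptiness, then $\refl$ inhabits every Leibniz equality $t=_{\sigma}q$, so the characterization of Lemma~\ref{lem.eq-in-model} is lost. It also inhabits $\ind_{\nat}$: for each admissible $P$, either $P\ \zero$ is interpreted as $\emptyset$ and the premise is vacuous, or every $P\ n$ contains $\refl$ and $\refl\cdot u=\refl$. So your model validates induction and cannot serve as the counter-model for Lemma~\ref{lem.no-ind} and Theorem~\ref{thm.nonderlampt}. Moreover, the identity types of Definition~\ref{def.interp-id}, which this model is meant to support alongside $\Sigma$, are still empty exactly when $a'\neq b'$. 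A nowhere-empty $C$ with $C(a',b',q):=\{\refl\}$, together with $t:=\lamstar x.\ \bK$, already excludes every $q$. Hence $\Sigma x\oftype\bool.\ x=^{\bool}\true$ brings the non-properness back even in your structure. The routine parts of your proposal are fine: Church pairs, the pairing and projection cases, substitution, and conversion via the reductions of $\bLambdaid$. The missing piece is a valid source of properness, and for $\genidrefl$ with its standard $\cN$ none exists.
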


We now extend the interpretation to identity types. Note that
identity types are logically equivalent to Leibniz equality (we have
terms of type $\forall x,y :\sigma.\ x=_{\sigma} y \to x=^{\sigma} y$
and of type $\forall x,y :\sigma.\  x=^{\sigma} y \to x=_{\sigma} y$),
but there is no isomorphism between them, so we cannot just use the
interpretation of Leibniz equality in a polyset-model as the
interpretation for identity types.

\begin{definition}\label{def.interp-id}
  Let $\cP$ be a polyset-model. Let $\sigma$ be a type and $a,b:\sigma$. Given valuations $\xi,\rho$, we interpret the type $a=^{\sigma}b$ as follows. We write $a'$ for $\intyr{a}$, $b'$ for $\intyr{b}$ and $\sigma'$ for $\intexr{\sigma}$ to keep the notation readable.
  $$\intexr{a=^{\sigma}b} := \bigcap_{C \in \sigma' \to\sigma'\to \bA \to \cP} \{ q \mid \forall t \in \Pi_{x\in\sigma'}. C(x,x,\refl)\ [ J\ t\ a'\ b'\ q \in C(a',b', q)]\}.$$
\end{definition}

Note that $\intexr{a=^{\sigma}b}$ need not exist. We know that $\cP$ is closed
under arbitrary intersections, but the sets $\{ q \mid \forall t \in
\Pi_{x\in\sigma'}. C(x,x,\refl)\ [ J\ t\ a'\ b'\ q \in C(a',b', q)]\}$
need not be in $\cP$.

\begin{lemma}\label{lem.interp-id}
  The interpretation of $a=^{\sigma}b$ in Definition \ref{def.interp-id} is well-defined for
  \begin{enumerate}
  \item the full-polyset model $\cP := \wp(\bLambdaid)$;
  \item the polyset model generated from $C:= \{\refl\}$, $\genidrefl$.
  \end{enumerate}
\end{lemma}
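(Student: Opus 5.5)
Since the interpretation is an intersection over all $C$ of a set, and $\cP$ is closed under arbitrary intersections (Definition \ref{def.polyset}), it suffices to show that for each fixed $C \in \sigma'\to\sigma'\to\bA\to\cP$ the set
$$S_C := \{ q \mid \forall t \in \Pi_{x\in\sigma'}. C(x,x,\refl)\ [ \Jcomb\ t\ a'\ b'\ q \in C(a',b', q)]\}$$
lies in $\cP$. (If the index set of the intersection is empty, we read the intersection as $\bA$, which lies in $\cP$ anyway.) Throughout I will also use that $\Pi_{x\in\sigma'}. C(x,x,\refl)$ is itself a polyset, being a dependent product.

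For the full polyset model $\cP = \wp(\bLambdaid)$ there is nothing to check: every subset of $\bA$ is a polyset, so $S_C\in\cP$ trivially, and hence the intersection is in $\cP$ as well. This case only requires noting that $\bLambdaid$ is a \weca, i.e.\ that the extended reduction system is Church--Rosser, so that $\beta$-classes are well behaved; this was asserted right after Definition \ref{def.lambda-id}.

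For $\genidrefl$ we must show $S_C = \emptyset$ or $\refl\in S_C$. I will distinguish cases on $T := \Pi_{x\in\sigma'}. C(x,x,\refl)$. If $T = \emptyset$, the condition in $S_C$ is vacuous, so $S_C=\bA$, which contains $\refl$. If $T\neq\emptyset$, then $\refl\in T$ since $T\in\genidrefl$. The natural candidate is to check whether $\refl\in S_C$, i.e.\ whether for all $t\in T$ we have $\Jcomb\ t\ a'\ b'\ \refl \in C(a',b',\refl)$. This reduces to $t\ a'$, and $t\ a' \in C(a',a',\refl)$ holds when $a' = b'$, but not in general. So I will split further. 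Subcase $C(a',b',q)\neq\emptyset$ for the relevant arguments: then $\refl\in C(a',b',q)$. The key trick I would use is that for $t=\refl$ we get $\Jcomb\ \refl\ a'\ b'\ q$, which need not reduce. Hence I would instead aim to prove directly that if $S_C\neq\emptyset$ then $\refl\in S_C$. Suppose $q_0\in S_C$; then taking $t:=\refl\in T$ gives $\Jcomb\ \refl\ a'\ b'\ q_0\in C(a',b',q_0)$, so $C(a',b',q_0)\neq\emptyset$. I would then try to show $\Jcomb\ t\ a'\ b'\ \refl \in C(a',b',\refl)$ for all $t\in T$ by exploiting $\refl\ x \redrefl \refl$: it gives $\Jcomb\ \refl\ a'\ b'\ \refl \redJ \refl\ a' \redrefl \refl$, which handles $t=\refl$ as soon as $C(a',b',\refl)\neq\emptyset$.

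The main obstacle is this last step: for arbitrary $t\in T$, $\Jcomb\ t\ a'\ b'\ \refl$ reduces to $t\ a'$, which lies in $C(a',a',\refl)$ but not obviously in $C(a',b',\refl)$ when $a'\neq b'$. I expect the intended resolution to be that, when $a'\neq b'$, $S_C$ is in fact empty for suitable reasons, or that membership of $\Jcomb\ t\ a'\ b'\ q$ forces $C(a',b',q)$ to contain all of the relevant terms. Failing that, I would modify the argument to compare $S_C$ with the whole intersection rather than with each $S_C$ separately: show that the full intersection $\intexr{a=^{\sigma}b}$ is $\emptyset$ when $a'\neq b'$, by choosing a specific $C$ (namely $C(x,y,q) := \bA$ if $x=y$ and $\emptyset$ otherwise, which lies in $\genidrefl$). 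When $a'=b'$, I would show that it contains $\refl$ via the computation above. This yields membership in $\genidrefl$ without needing each $S_C$ to be a polyset.
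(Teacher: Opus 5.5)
Your final argument, after you drop the per-$C$ attempt, is essentially the paper's own proof. You split on whether $a'=b'$. If so, $\refl$ lies in the intersection because $\Jcomb\ t\ a'\ a'\ \refl \redJ t\ a' \in C(a',a',\refl)$. If not, a $C$ that is $\bA$ on the diagonal and $\emptyset$ at $(a',b')$ makes the whole intersection empty.

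You were right to abandon the per-$C$ route, as the paper itself notes that the individual sets need not be polysets. For instance, taking $C$ to be $\bA$ everywhere except $C(a',b',\refl)=\emptyset$ gives $S_C=\bA\setminus\{\refl\}\notin\genidrefl$.

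In the full-polyset case, Church--Rosser plays no role, and it is not what makes $\bLambdaid$ a \weca. It only matters for non-degeneracy.
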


\begin{proof}
For the first, this is immediate, for the second, we distinguish cases: (1) If $a'=b'$,
then $\refl\in \intexr{a=^{\sigma}b}$.  (2) If $a'\neq b'$, we can choose
a $C$ such that $C(a',b',q) =\emptyset$ for all $q\in \bA$, while $C(t,t,q) =\bA$ for all $t\in\sigma'$ and all $q\in \bA$. Then $\intexr{a=^{\sigma}b} =\emptyset$.
\end{proof}

\begin{lemma}\label{lem.sound}
The rules for the identity type are sound in a polyset-model where the interpretation of $a=^{\sigma}b$ is well-defined.
\end{lemma}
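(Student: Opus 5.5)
We must check soundness of the three rules (identity formation, refl, J) and the conversion $\Jcomb(c,a,a,\refl)\redrefl c\,a$, in any polyset model over $\bLambdaid$ in which $\intexr{a=^{\sigma}b}$ is well-defined (i.e.\ is in $\cP$). The interpretation of terms is extended by $\intyr{\refl}:=\refl$ and $\intyr{\Jcomb(c,a,b,q)} := \Jcomb\cdot\intyr{c}\cdot\intyr{a}\cdot\intyr{b}\cdot\intyr{q}$. The argument extends the soundness proof of Theorem \ref{thmsound} by adding these cases to its induction on derivations, so we may assume the induction hypothesis for all premises.

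\emph{Formation.} By hypothesis $\intexr{a=^{\sigma}b}\in\cP=\cVxr{\star}$, which is exactly what the rule requires.

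\emph{Conversion.} Since $\Jcomb\ c\ a\ a\ \refl \redJ c\ a$ holds in $\bLambdaid$, we get $\intyr{\Jcomb(c,a,a,\refl)} = \intyr{c}\cdot\intyr{a} = \intyr{c\ a}$. Hence the extended $\beta$-equality is validated, and conv remains sound.

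\emph{Refl.} Let $a'=\intyr{a}$ and $\sigma'=\intexr{\sigma}$. Take any $C\in\sigma'\to\sigma'\to\bA\to\cP$ and any $t\in\Pi_{x\in\sigma'}C(x,x,\refl)$. Then $\Jcomb\ t\ a'\ a'\ \refl = t\cdot a'\in C(a',a',\refl)$, because $a'\in\sigma'$ by the induction hypothesis. So $\refl$ lies in every set of the intersection, hence $\refl\in\intexr{a=^{\sigma}a}$.

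\emph{J.} This is the main step. Assume $\xi,\rho\models\Gamma$. By the induction hypothesis $a',b'\in\sigma'$ and $q'=\intyr{q}\in\intexr{a=^{\sigma}b}$. Define
$$C(x,y,p):=\intexrc{\tau}{\xi\rho(x:=x,y:=y,p:=p)}$$
for $x,y\in\sigma'$ and $p\in\bA$. The obstacle is that the premise $\Gamma,x:\sigma,y:\sigma,p:x=^\sigma y\vdash\tau:\star$ only guarantees $C(x,y,p)\in\cP$ when $p\in\intexr{x=^{\sigma}y}$, whereas $C$ must be a total function into $\cP$. We fix this by setting $C(x,y,p):=\bA$ (which is in $\cP$) whenever $p\notin\intexr{x=^\sigma y}$.

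We then check two things. First, $c'=\intyr{c}\in\Pi_{z\in\sigma'}C(z,z,\refl)$. This holds because the premise gives $c'\cdot z\in\intexrc{\tau[x,y:=z,p:=\refl]}{\xi\rho(z:=z)}$, which by the substitution lemma equals $C(z,z,\refl)$; here we use $\refl\in\intexr{z=^\sigma z}$, as shown in the refl case. Second, by the definition of $\intexr{a=^{\sigma}b}$, $\Jcomb\ c'\ a'\ b'\ q'\in C(a',b',q')$. Since $q'$ lies in the identity set, this is the genuine interpretation of $\tau$, which by the substitution lemma equals $\intexr{\tau[x:=a,y:=b,p:=q]}$. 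The only delicate points are this totalization of $C$ and the substitution lemma for the new constructs, which is proved by the usual routine induction.
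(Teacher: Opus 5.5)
Your proposal is correct and follows the paper's proof. The refl case is the same computation. For $\Jcomb$ the paper likewise extends the interpreted family $C$ to a total function on $\sigma'\times\sigma'\times\bA$. It pads with $\emptyset$ where you pad with $\bA$, which makes no difference: neither $C(z,z,\refl)$ nor $C(a',b',q')$ is a padded value. You also spell out the formation and conversion cases, which the paper leaves implicit.
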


\begin{proof}
  We need to check the rule for $\refl$ and the rule for $\Jcomb$.
  \begin{itemize}
  \item[$\refl$] We have to check that $\refl \in
    \intexr{a=^{\sigma}a}$ for any type $\sigma$ and $a:\sigma$.  This
    is the case, because, writing again $a'$ for $\intyr{a}$, we have
    $t\ a' \in C(a',a', \refl)$ and $J\ t\ a'\ a'\ \refl =t\ a'$, so
    $J\ t\ a'\ a'\ \refl \in C(a',a', \refl)$.
  \item[$\Jcomb$] We look at the hypotheses of the  $\Jcomb$-rule: Let $\Gamma$ be a context and within $\Gamma$ we have $\sigma:\U$, $a,b:\sigma$, $q:
    a=^{\sigma}b$ and $c : \Pi z: \sigma.\ \tau[x,y:=z,p := \refl]$. Furthermore we have  $\tau$ with $\Gamma, x: \sigma, y: \sigma, p: x
    =^{\sigma} y \vdash \tau : \U$. Let 
    $\xi,\rho\models \Gamma$. We have to show that $\intyr{\Jcomb
      (c,a,b,q)} \in \intexr{\tau[x:=a, y:=b, p:=q]}$.  As before we
    abbreviate $\intyr{a}$ to $a'$, $\intexr{\sigma}$ to $\sigma'$
    etc.\\
    Interpreting $\lambda x,y:\sigma.\ \lambda p:x =^{\sigma} y.\  \tau$ in the model 
    we get an element $C \in \Pi_{x,y\in\sigma'}.\intexr{x=^{\sigma}y} \to \cP$,
    and we need to show that $\intyr{\Jcomb (c,a,b,q)} \in
    C(a',b',q')$. We extend $C$ to $C_0 \in \sigma'\to\sigma'\to
    \bA \to \cP$ by defining $C_0(x,y,r) := \emptyset$ for $r\notin
    \intexr{x=^{\sigma}y}$.\\
    From $c' \in \Pi_{z\in \sigma'}.\ C(z,z,\refl)$ we get $c' \in \Pi_{z\in \sigma'}.\ C_0(z,z,\refl)$ and from $q'\in \intexr{a=^{\sigma}b}$ we conclude that $\Jcomb\ c'\ a'\ b'\ q' \in C_0(a',b',q') =C(a',b',q')$ and we are done. \qedhere
  \end{itemize}
\end{proof}

\begin{lemma}\label{lem.uip}
$\UIP$ holds in a polyset-model where the interpretation of $a=^{\sigma}b$ is well-defined.
\end{lemma}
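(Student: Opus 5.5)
To show that $\UIP$ holds, I will exhibit a single closed realizer that lies in the interpretation of
$$\UIP = \Pi \alpha: \U.\ \Pi x, y: \alpha .\ \Pi p, q: x = y.\ p = q.$$
The obvious candidate is $\lamstar x\, y\, p\, q.\ \refl$, since under the interpretation $\intyr{\lambda\alpha:\U.\,t} = \intyr{t}$ the type abstraction disappears. Unfolding the $\Pi$ and $\bigcap$ clauses, it suffices to fix a polyset $X\in\cP$ for $\alpha$, elements $a,b\in X$, and $p,q\in\intexrc{x=^{\alpha}y}{\xi\rho'}$ with $\rho'$ sending $x,y$ to $a,b$. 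We then show that $\refl \in \intexrc{p=^{x=y}q}{\xi\rho''}$, where $\rho''$ additionally sends $p,q$ to the given elements.

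By the $\refl$ case of Lemma~\ref{lem.sound}, $\refl$ inhabits any identity type whose two sides have the same interpretation. So the plan reduces to one claim: any two elements $p,q$ of $\intexr{a=^{\sigma}b}$ are equal in $\bLambdaid$, in fact both equal to $\refl$.

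To prove the claim, I use the freedom to choose $C$ in Definition~\ref{def.interp-id}. Given $p$ in the interpretation, take $C(x,y,r) := \{r' \mid r' = \refl\}$ when $r=\refl$ (or simply $C(x,y,r) := \{\refl\}$ if $r = \refl$) and $C(x,y,r) := \emptyset$ otherwise. For this to be a legitimate choice, $C$ must land in $\cP$. In the full polyset model that is automatic. In $\genidrefl$, the sets $\{\refl\}$ and $\emptyset$ both qualify, since every nonempty member of $\genidrefl$ must contain $\refl$ and $\{\refl\}$ does. Next, $t := \lamstar z.\refl$ belongs to $\Pi_{x\in\sigma'}C(x,x,\refl)$. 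Membership of $p$ therefore gives $\Jcomb\ t\ a'\ b'\ p \in C(a',b',p)$. Since $C(a',b',p)$ is nonempty only when $p=\refl$, we conclude $p = \refl$; the same argument gives $q=\refl$. Thus $p=q$, and $\refl$ realizes $p = q$.

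One could also note that the reduction rule $\refl\ q\redrefl\refl$ makes $\refl$ behave as an absorbing realizer. The argument above does not rely on this, however; it uses only the choice of $C$.

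The main obstacle is checking that this discriminating $C$ is admissible. The intersection ranges over $C\in\sigma'\to\sigma'\to\bA\to\cP$, so I must make sure the definition of the interpretation quantifies over arbitrary set-theoretic functions into $\cP$, not merely definable ones. I also need that $\{\refl\}\in\cP$, which holds in both models of Lemma~\ref{lem.interp-id}. If that fails for some model, the fallback is the case analysis from the proof of Lemma~\ref{lem.interp-id}: when $a'\neq b'$ the identity type is empty and $\UIP$ holds vacuously, so only the case $a'=b'$ needs the argument above.
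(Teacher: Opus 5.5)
\textbf{Minor gap in the choice of $C$; otherwise this is the paper's argument.} Your core argument matches the paper's. You pick a $C$ that is nonempty only when its third argument is $\refl$, and membership of $p$ in $\intexr{a=^{\sigma}b}$ forces $\Jcomb\ t\ a'\ b'\ p\in C(a',b',p)\neq\emptyset$, hence $p=\refl$. Your explicit realizer $\lamstar x\,y\,p\,q.\refl$, checked via the $\refl$ case of Lemma~\ref{lem.sound}, spells out a final step the paper leaves implicit. Your worry about definability of $C$ is unfounded: $\sigma'\to\sigma'\to\bA\to\cP$ is the full set-theoretic function space of Definition~\ref{def.predstruct}.

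The gap is your choice $C(x,y,\refl):=\{\refl\}$. It is admissible only if $\{\refl\}\in\cP$, and you verify this only for the two models of Lemma~\ref{lem.interp-id}, whereas the statement is about any polyset model in which the identity interpretation is well-defined. Your fallback does not close the gap. It disposes of the case $a'\neq b'$, but the case $a'=b'$ still uses the discriminating $C$, so it still needs $\{\refl\}\in\cP$.

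The paper avoids this by setting $C(-,-,\refl):=\bA$ and $C(-,-,r):=\emptyset$ for $r\neq\refl$. Both values lie in every polyset structure, and $\Pi_{x\in\sigma'}C(x,x,\refl)=\bA$ is trivially nonempty. The rest of your argument then goes through verbatim.

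Alternatively, you can keep $\{\refl\}$ and justify it. The same $\bA$/$\emptyset$-valued $C$ shows that $\intexr{a=^{\sigma}a}=\{\refl\}$ whenever $a'\in\sigma'$. So the well-definedness hypothesis itself already forces $\{\refl\}\in\cP$. That is a valid reason, but it is not the one you give.
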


\begin{proof}
  Let $p\in \intexr{a=^{\sigma}b}$ for $a,b:\sigma$. Take $C \in
  \sigma' \to\sigma'\to \bA \to \cP$ with $C(-,-,\refl) =\bA$ and
  $C(-,-,r) =\emptyset$ if $r\neq \refl$. Then there is a $t \in
  \Pi_{x\in\sigma'}.C(x,x,\refl)$ and for such a $t$ we have
  $\Jcomb\ t\ a'\ b'\ p \in C(a',b',p)$. This means that $C(a',b',p)\neq
  \emptyset$, so $p =\refl$ in $\bA$.
\end{proof}

In the following we restrict ourselves to the \weca\
$\bA:= \bLambdaid$.

\begin{lemma}\label{lem.funext}
  $\funext$ does not hold in the polyset model generated from $C:= \{\refl\}$, $\genidrefl$.
\end{lemma}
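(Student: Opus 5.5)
We show that the interpretation of one concrete instance of $\funext$ is empty in the model $\genidrefl$ over $\bLambdaid$. The instance uses two functions that agree pointwise but are interpreted as distinct elements of $\bLambdaid$. Take $\sigma := \bool$ and $\tau := \bool$, and
$$f := \lambda b\oftype\bool.\ b, \qquad g := \lambda b\oftype\bool.\ b\ \bool\ \true\ \false .$$
Their interpretations are $f' = \lambda v.\, v$ and $g' = \lambda v.\, v\,\bK\,\bKs$. These are distinct normal forms of $\bLambdaid$: neither contains $\Jcomb$ or $\refl$, so no $\redJ$, $\redsig$ or $\redrefl$ redex occurs. By Church--Rosser of $\bLambdaid$ (cited from Mitschke and Klop after Definition \ref{def.lambda-id}), $f' \neq g'$ in $\bA$.

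First we show that $\intexr{f =^{\bool\to\bool} g} = \emptyset$. This follows from the proof of Lemma \ref{lem.interp-id}, case (2): since $f'\neq g'$, we may choose $C$ with $C(f',g',q)=\emptyset$ for all $q$ and $C(t,t,q)=\bA$ for all $t$. The intersection is then empty.

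Second we show that $\intexr{\Pi x\oftype\bool.\ f\ x =^{\bool} g\ x}$ is nonempty, which is the key step. In $\genidrefl$ the interpretation of $\bool = \Pi\alpha\oftype\U.\,\alpha\to\alpha\to\alpha$ is an intersection over all $X\in\genidrefl$, including $X=\emptyset$. That instance gives $\emptyset\to\emptyset\to\emptyset = \bA$, so it does not constrain anything. The instance $X=\{\refl\}$ forces $u\,\refl\,\refl = \refl$, and other instances constrain $u$ more strongly, but $\bool'$ certainly contains terms other than $\bK$ and $\bKs$. For example, $\refl$ itself belongs to $\bool'$, since $\refl\,a\,b = \refl \in X$ whenever $X\neq\emptyset$. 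For such $u$ we would need $f'u = g'u$, that is $u = u\,\bK\,\bKs$. This holds for $u=\refl$, but in general it fails. This is the main obstacle: the pointwise equality must hold for \emph{every} element of the interpreted domain, not only for the standard booleans.

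To get around this, I would not try to compute $\bool'$ exactly. Instead I would choose the domain so that the pointwise-equal functions agree on every element. The cleanest option is to take $\sigma$ to be the empty type $\bot := \Pi\alpha\oftype\U.\,\alpha$, whose interpretation is $\bigcap_X X = \emptyset$. Let $\tau := \bool$, $f := \lambda x\oftype\bot.\ \true$ and $g := \lambda x\oftype\bot.\ \false$. Then $\Pi x\oftype\bot.\ f\ x = g\ x$ is interpreted as $\Prodt_{t\in\emptyset}\ldots = \bA$, which is nonempty. On the other hand, $f' = \lambda v.\bK$ and $g' = \lambda v.\bKs$ are distinct normal forms, so by the first step $\intexr{f=g}=\emptyset$.

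Finally we conclude. The interpretation of $\funext$ contains the dependent product, taken over $\intexr{\bot\to\bool}$ (which contains $f'$ and $g'$), of nonempty sets mapped into empty sets. Instantiating at $f'$ and $g'$ shows that any realizer $a$ of $\funext$ would have to satisfy $a\,f'\,g'\,r \in \emptyset$ for some $r$. Hence the interpretation of $\funext$ is empty. By Soundness (Theorem \ref{thmsound}, extended by Lemmas \ref{lem.soundsig} and \ref{lem.sound}), $\funext$ is not inhabited in this model.
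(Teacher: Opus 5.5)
Your final argument, with domain $\bot$, is correct, but it takes a different route from the paper.

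The paper keeps the domain $\bool$. It asserts that in $\genidrefl$ the interpretation of $\bool$ is exactly $\{\bK,\bKs,\refl,\lambda x.\refl,\lambda x y.\refl\}$. It then chooses $f := \lambda b\oftype\bool.\ b\ \bool\ \true\ \false$ and $g := \lambda b\oftype\bool.\ b\ \bool\ \true\ \false\ \bool\ \true\ \false$. Their interpretations agree on all five elements, the junk ones included, although $f'\neq g'$. Emptiness of $\intexr{f=^{\bool\to\bool}g}$ is then shown just as in your first step.

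You correctly diagnosed why your first pair fails: it breaks on $\lambda x.\refl$ and $\lambda x y.\refl$. Instead of computing the interpreted booleans, you make the pointwise hypothesis vacuous. Since $\emptyset\in\cP$, $\bot$ is interpreted as $\emptyset$, so $\Pi x\oftype\bot.\ f\ x=^{\bool}g\ x$ is interpreted as $\bA$. Meanwhile $\lamstar v.\bK\neq\lamstar v.\bKs$ makes the identity type empty.

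\emph{What your route buys.} It is simpler and more general. It needs no characterisation of the interpreted booleans, which the paper states without proof and which requires a syntactic analysis of $\bLambdaid$-terms. It also goes through verbatim in any polyset model over a non-degenerate \weca\ containing $\Jcomb$ and $\refl$ in which identity types are well-defined, for example the full polyset model over $\bLambdaid$. So it shows that the failure of $\funext$ comes from the intensional function spaces of these models and is not specific to $\genidrefl$.

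\emph{What the paper's route buys.} It gives a stronger counterexample: $\funext$ already fails for two functions on an inhabited data type. The failure is therefore not just an artefact of the empty-domain case.

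A minor wording point on your last paragraph: the interpretation of $\funext$ \emph{is} that dependent product, not something that contains it. Also, a realizer $a$ would need $a\,f'\,g'\,r\in\emptyset$ for every $r\in\bA$, not just for some $r$.
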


\begin{proof}
  Consider $f, g: \bool \to \bool$ defined by
  \begin{eqnarray*}
    f&:=& \lambda b: \bool.\ b\ \bool\ \true\ \false\\
    g&:=& \lambda b:\bool.\ b\ \bool\ \true\ \false\ \bool\ \true\ \false.
  \end{eqnarray*}
  We show that (1)
$\intexr{\Pi b:\bool. f\ x=^{\bool}g\ x} \neq \emptyset$ and (2)
$\intexr{ f=^{\bool\to\bool}g} = \emptyset$.

(1) The interpretation of $\bool$ is $\bool' := \cap_{x\in\cP} X\to
X\to X$, which is $\{\bK,\bKs,\refl, \lambda x.\ \refl, \lambda x,y.\ \refl\}$, where $\bK
= \lambda x\ y.\ x = \inty{\true}$ and $\bKs = \lambda x\ y.\ y =
\inty{\false}$. The interpretations of $f$  and $g$ in the model are $f' := \lambda b. b\ \bK\ \bKs$ and $g' := \lambda b. b\ \bK\ \bKs\ \bK\ \bKs$ 

We claim that we have for all $b\in\bool'$:
$$\refl \in \bigcap_{C \in \bool' \to\bool'\to \bA \to \cP} \{ q \mid
\forall t \in \Pi_{x\in\bool'}. C(x,x,\refl)\ [
  J\ t\ (f'\ b)\ (g'\ b)\ q \in C((f'\ b),(g'\ b), q)]\}.$$
This is because for $b =\bK$, $b=\bKs$, $b=\refl$, $b= \lambda
x.\ \refl$ and $b= \lambda
x,y.\ \refl$,  we have $f' b = g' b$ in $\bLambdaid$, so for any $C$,
$J\ t\ (f'\ b)\ (g'\ b)\ \refl = t\  (f'\ b) \in C((f'\ b),(g'\ b),
\refl)$. So, we have a term in $\Pi_{b\in\bool'}. \inte{f\ x=^{\bool}g\ x}$.

(2) If $q \in \intexr{ f=^{\bool\to\bool}g}$, then take a $D \in (\bool' \to\bool')\to (\bool' \to\bool') \to \bA \to \cP$ such that $D(h,h,\refl) = \bA$ for any $h\in \bool'\to\bool'$ and $D(f',g', -) =\emptyset$. Then we have a term\\ $t \in \Pi_{h\in \bool'\to\bool'}.D(h,h,\refl)$ and so $\Jcomb\ t\ f'\ g'\ q \in D(f',g',q)$. Contradiction, because $D(f',g',q) =\emptyset$. So $\intexr{ f=^{\bool\to\bool}g} = \emptyset$.

(NB. That $f'\neq g'$ in $\bLambdaid$ follows from the Church-Rosser property. From $f'= g'$ it follows that all terms in $\bLambdaid$ are equal: just consider $f'\ \bI\ \bK$ and $g'\ \bI\ \bK$.) 
\end{proof}

\begin{lemma}\label{lem.no-ind}
  Induction for the natural numbers does not hold in the polyset model generated from $C:= \{\refl\}$, $\genidrefl$.
\end{lemma}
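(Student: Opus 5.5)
We aim to show that the interpretation of $\ind_{\nat}$ in the polyset model $\langle \bLambdaid, \genidrefl\rangle$ is empty. The plan is to exhibit an element of $\inte{\nat}$ that is not a numeral and a predicate $P \in \inte{\nat}\to\cP$ that fails exactly on it, while the premises of induction stay inhabited. Since every nonempty polyset contains $\refl$, the natural candidate is $\refl$ itself, or an element such as $\lambda x.\,\refl$ or $\lambda x\,f.\,\refl$.

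First we compute (a lower bound for) $\nat' := \inte{\nat} = \bigcap_{X\in\cP} X\to(X\to X)\to X$. For $X=\emptyset$ every element lies in $X\to(X\to X)\to X$. For $X\ni\refl$ we check that $\refl$ belongs to $X\to(X\to X)\to X$, using $\refl\, q \redrefl \refl$. So $\refl\in\nat'$, and likewise the Church numerals $\overline{n} = \lambda x\,f.\,f^n x$ are in $\nat'$.

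Next we define $P : \nat'\to\cP$ by setting $P(\overline{n}) := \bA$ for every Church numeral $\overline{n}$ (taken modulo the equality of $\bLambdaid$), and $P(t) := \emptyset$ otherwise. Any function into $\{\emptyset,\bA\}\subseteq\cP$ is admissible, and $\nat'\to\cP$ is the full function space in the predicative structure. We check the premises. $P(\inty{\zero})=\bA$ is nonempty. For the step, we need $\Pi_{y\in\nat'}\, P(y)\to P(\inty{\suc}\,y)$ to be nonempty. If $y$ is a numeral, then $\inty{\suc}\,y = \overline{n+1}$ and $P(\inty{\suc}\,y)=\bA$. If $y$ is not a numeral, then $P(y)=\emptyset$ and $\emptyset\to Z=\bA$. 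So the step set is all of $\bA$. The conclusion $\Pi_{x\in\nat'}P(x)$, however, is empty, because $P(\refl)=\emptyset$. Hence for this $P$ we get $\bA \to \bA \to \emptyset = \emptyset$ (the first two are $\bA$, hence nonempty). The interpretation of $\ind_{\nat}$ is the intersection over all $P$ of the sets $P(\inty{\zero})\to(\ldots)\to\Pi_x P(x)$, so it is empty.

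The main obstacle is justifying that $\refl$ is not $\bLambdaid$-equal to any Church numeral, and that $\inty{\suc}\,y$ lands in a numeral exactly when $y$ is one. The latter is only needed in the direction where $y$ is a numeral, which is direct computation. For the former, $\refl$ is a normal form under all rules of Definition~\ref{def.lambda-id}, and so is $\overline{n}$, and they are distinct. By the Church--Rosser property of $\bLambdaid$ (cited via \cite{Mitschke77,KlopPhD}) they are therefore not equal. One must also make sure $P$ is well defined on equivalence classes, which again follows from Church--Rosser since distinct numerals are distinct normal forms. Finally, by soundness (Theorem~\ref{thmsound}, together with Lemmas~\ref{lem.soundsig} and \ref{lem.sound}), no closed term of type $\ind_{\nat}$ exists in $\lampt$ extended with $\Sigma$-types and identity types. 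By Lemma~\ref{lem.uip} this holds even though $\UIP$ is valid in the model.
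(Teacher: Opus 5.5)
Your proof is correct and is essentially the paper's argument. The paper cites \cite{NoInduction} for the fact that validity of induction forces $\inte{N}=\{S^n 0\mid n\in\IN\}$ (for an arbitrary encoding $N,0,S$), which is exactly what your two-valued predicate $P$ establishes, and then notes $\refl\in\inte{N}$; you inline this for the Church numerals and supply the Church--Rosser check that $\refl$ is not a numeral.

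One caution on your closing remark, which goes beyond the lemma: your $P$ is not proper in the sense of Definition~\ref{def.sigmainterp}. For a variable $Q\oftype\nat\arr\star$ valued by $P$, the set $\{q\mid \pr1\ q\in\inte{\nat}\wedge\pr2\ q\in P(\pr1\ q)\}$ that should interpret $\Sigma x\oftype\nat.\ Q\ x$ contains $\tup{\inty{\zero},\inty{\zero}}$ but not $\refl$ (since $\pr1\ \refl=\refl$ and $P(\refl)=\emptyset$), so it is not in $\genidrefl$. Hence this model, with your $P$, does not back Lemma~\ref{lem.soundsig}, and what you may conclude is non-derivability in $\lampt$ with identity types and $\UIP$, as in Theorem~\ref{thm.nonderlampt}.
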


\begin{proof}
  The proof is basically the same as in \cite{NoInduction}, where it
  is shown that for polyset-model $\cS$, we have: if $\cS \Models
  \ind_{N,0,S}$, then $\inte{N} = \{ S^n 0 \mid n\in \IN \}$. But in
  the present model $\inte{N}$ also contains $\refl$.
\end{proof}

We conclude with the following theorem, that collects together the
results of Lemmas \ref{lem.sound}, \ref{lem.uip} and \ref{lem.no-ind}.

\begin{theorem}\label{thm.nonderlampt}
  In $\lampt$ extended with identity types and \UIP, we cannot define a data type for natural numbers for which we can prove the induction principle.
\end{theorem}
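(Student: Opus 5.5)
The plan is a counter-model argument in the polyset model generated from $C:=\{\refl\}$ over the \weca\ $\bLambdaid$, i.e.\ $\cS=\langle\bLambdaid,\genidrefl\rangle$. Suppose, towards a contradiction, that in $\lampt$ extended with identity types and \UIP\ we have a closed type $N$, closed terms $0:N$ and $S:N\to N$, and a closed inhabitant of $\ind_{N,0,S}:=\Pi P:N\to\U.\ P\ 0\to(\Pi y:N.\ P\ y\to P\ (S\ y))\to\Pi x:N.\ P\ x$. (If $\Sigma$-types are also used in defining $N$, Lemma~\ref{lem.soundsig} covers them in the same model.) By Lemma~\ref{lem.interp-id}(2) the interpretation of identity types is well-defined in $\cS$. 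By Lemma~\ref{lem.sound} the identity rules are sound, and together with Theorem~\ref{thmsound} this gives soundness of the whole extended system. By Lemma~\ref{lem.uip} the type \UIP\ is inhabited in $\cS$, so adding \UIP\ as an axiom (a closed term of its type) is also sound. Hence $\Models\ind_{N,0,S}$ would hold in $\cS$.

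Next I rerun the argument of \cite{NoInduction} that validity of induction forces $\inte{N}=\{\inty{S}^n\inty{0}\mid n\in\IN\}$. Write $N':=\inte{N}$. Let $Q:=\{\inty{S}^n\inty{0}\mid n\in\IN\}\subseteq N'$, and interpret $P$ as the predicate $t\mapsto \bA$ if $t\in Q$, and $t\mapsto\emptyset$ otherwise. This is a legitimate element of $N'\to\cP$, since in $\genidrefl$ both $\emptyset$ and $\bA$ are polysets. Then $\inte{P\ 0}=\bA$ and $\inte{\Pi y:N.\ P\ y\to P(S\ y)}$ is nonempty: it contains $\lambda y\,h.\ h$, because $y\in Q$ implies $\inty{S}\cdot y\in Q$, and otherwise the domain $\inte{P\ y}$ is empty. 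Applying the realizer of $\ind$ (type abstraction is erased in $\intyr{-}$) yields an element of $\Prodt_{t\in N'}P(t)$, so $P(t)\neq\emptyset$ for all $t\in N'$, i.e.\ $N'\subseteq Q$.

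Finally I derive the contradiction $\refl\in N'\setminus Q$. First, $N'$ is nonempty since $\inty{0}\in N'$, so by the definition of $\genidrefl$ we get $\refl\in N'$. Second, I must show $\refl\neq\inty{S}^n\inty{0}$ in $\bLambdaid$ for every $n$. This is the main obstacle. I would argue via the Church--Rosser property of $\bLambdaid$ (cited after Definition~\ref{def.lambda-id}): $\refl$ is a normal form. If all $\inty{S}^n\inty{0}$ equal $\refl$, I would instead use that $N'\subseteq Q$ makes $N'$ consist of values that are all identified. A cleaner route is to first show $Q$ is infinite, or at least has two distinct elements, which is what a ``data type for natural numbers'' requires: from $\vdash 0\neq S\,0$ (inhabited $0=_N S\,0\to\bot$), Lemma~\ref{lem.eq-in-model} gives $\inty{0}\ne\inty{S}\inty{0}$.

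If some $\inty{S}^n\inty{0}$ were $=\refl$, then $\inty{S}^{n+1}\inty{0}=\inty{S}\cdot\refl$. I would try to show that $\refl$ being in $Q$ makes $Q$ collapse into a finite cycle, and then contradict the injectivity of $S$ that a natural-number type must satisfy (again via Lemma~\ref{lem.eq-in-model}). That would establish $\refl\notin Q$ and finish the proof.
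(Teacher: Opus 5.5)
Up to the last step your argument is the paper's: the model $\genidrefl$ over $\bLambdaid$, soundness of the identity rules and of \UIP\ via Lemmas \ref{lem.interp-id}, \ref{lem.sound} and \ref{lem.uip}, and the argument of \cite{NoInduction} behind Lemma \ref{lem.no-ind}. You carry that argument out correctly with the predicate that is $\bA$ on $Q=\{\inty{S}^n\inty{0}\mid n\in\IN\}$ and $\emptyset$ elsewhere.

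The gap is the final step, $\refl\notin Q$. You flag it but do not establish it, and the route you sketch does not work:
\begin{itemize}
\item From $\inty{S}^k\inty{0}=\refl$ nothing follows about $\inty{S}\cdot\refl$, since the rule $\refl\ q\redrefl\refl$ only fires when $\refl$ is in function position. So there is no reason for $Q$ to become a cycle.
\item A cycle would not contradict injectivity of $S$ anyway, since a cyclic permutation is injective. It contradicts only injectivity together with $0\neq S\,x$ for all $x$.
\item Even the full Peano axioms only make the $\inty{S}^n\inty{0}$ pairwise distinct, which does not stop one of them from being $\refl$.
\end{itemize}
In this model $\ind_{N,0,S}$ is valid exactly when $\inte{N}=Q$ (then $\lambda p\,h\,x.\,\refl$ realizes it). Moreover, $\refl$ is the only element that $\genidrefl$ puts into every nonempty polyset. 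So everything rests on a property of the encoding that has to be proved.

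The paper's proof of Lemma \ref{lem.no-ind} only observes that $\inte{N}$ also contains $\refl$, i.e.\ it takes $\refl\notin\{S^n 0\}$ for granted. In the setting of \cite{NoInduction} that is unproblematic: closed $\lampt$-terms are interpreted as constant-free $\lambda$-terms, which by Church--Rosser are never convertible to a constant. Here $\refl$ is itself a closed term, and some hypothesis is genuinely needed. Take $\top:=\Pi\beta{:}\U.\,\beta\to\beta$, $\sI_\top:=\lambda\beta{:}\U.\lambda x{:}\beta.\,x$, $N:=\sI_\top=^{\top}\sI_\top$, $0:=\refl$ and $S:=\lambda n{:}N.\,n$. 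Then $\ind_{N,0,S}$ is inhabited, because \UIP\ gives $x=^{N}\refl$ for every $x{:}N$ and $\Jcomb$ transports $P\,\refl$ to $P\,x$. Yet in the model $\inte{N}=\{\refl\}=Q$.

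So you were right to add a non-degeneracy condition, but it must be one from which $\refl\neq\inty{S}^n\inty{0}$ actually follows. One example is requiring that each $\inty{S}^n\inty{0}$ reduces to an abstraction, as for the Church numerals; then Church--Rosser for $\bLambdaid$ rules out conversion to the normal form $\refl$. The facts about $0\neq S\,0$ and injectivity that you transfer to the model do not give this.
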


\section{Conclusion and Discussion}
We have shown a number of {\em non-derivability results\/} in
polymorphic dependent type theory $\lampt$. We believe that the proofs
go through without change if we replace $\lampt$ with the pure
Calculus of Constructions \CC.

In Section \ref{sec.nonderiv} we have looked at quotient types and coinductive types, in particular the data type of streams.
It would be interesting to generalize these results in the following directions.
\begin{itemize}
\item To prove that there exists no stream type in $\lampt$ with a
  coinduction principle. Our Lemma \ref{thm.nocoind} just shows that
  the well-known definable streams don't have coinduction principle,
  but maybe one can give a smarter encoding of streams.
  \item Theorem \ref{thm.no-quotients} proves that parametric
    quotients don't exists. On the other hand, Example
    \ref{exa.trivquotient} gives a concrete example of a
    non-parametric quotient, but there the relation is really trivial:
    just the identity itself. Are there other non-parametric quotients
    (besides the trivial one)? Taking $R\ x\ y := \top$ we can define
    a $\cls$ function and $\widehat{f}$, but $\IndQuot$ is already
    unclear. In \cite{nuo} it has been studied which quotient types
    are definable and which ones are not. But this is in predicative
    type theory with inductive types, which has more expressive power.
  \item With functional extensionality, the definable quotient is
    parametric, as shown in \cite{BronsveldGvdW}. But to construct a
    strong quotient type, we add $\Sigma$-types, identity types and
    \UIP. Can one prove that there are no strong quotients in $\lampt
    +\funext$?
\end{itemize}

In Section \ref{sec.extensions} we have looked at extensions of $\lampt$ where inductive types (including quotient types) and coinductive types can be define with their proper induction and coinduction principles, as studied in \cite{encodings,BronsveldGvdW}. We have seen that if we add $\Sigma$-types and an identity type with \UIP, we still cannot have an encoding of natural numbers with an induction principle. We need to also add \funext. This theorem is not limited to natural numbers, but holds for any
algebraic data type, like booleans, lists, trees, ... Combined with
the results from {\cite{encodings}, this means that \funext\ is crucial for defining induction.

\bibliographystyle{eptcs}
\bibliography{references}
\end{document}

%% file: prooftree.tex
\newdimen\proofrulebreadth \proofrulebreadth=.05em
\newdimen\proofdotseparation \proofdotseparation=1.25ex
\newdimen\proofrulebaseline \proofrulebaseline=2ex
\newcount\proofdotnumber \proofdotnumber=3
\let\then\relax
\def\hfi{\hskip0pt plus.0001fil}
\mathchardef\squigto="3A3B
\newif\ifinsideprooftree\insideprooftreefalse
\newif\ifonleftofproofrule\onleftofproofrulefalse
\newif\ifproofdots\proofdotsfalse
\newif\ifdoubleproof\doubleprooffalse
\let\wereinproofbit\relax
\newdimen\shortenproofleft
\newdimen\shortenproofright
\newdimen\proofbelowshift
\newbox\proofabove
\newbox\proofbelow
\newbox\proofrulename
\def\shiftproofbelow{\let\next\relax\afterassignment\setshiftproofbelow\dimen0 
}
\def\shiftproofbelowneg{\def\next{\multiply\dimen0 by-1 }%
\afterassignment\setshiftproofbelow\dimen0 }
\def\setshiftproofbelow{\next\proofbelowshift=\dimen0 }
\def\setproofrulebreadth{\proofrulebreadth}

\def\prooftree{
\ifnum  \lastpenalty=1
\then   \unpenalty
\else   \onleftofproofrulefalse
\fi
\ifonleftofproofrule
\else   \ifinsideprooftree
        \then   \hskip.5em plus1fil
        \fi
\fi
\bgroup
\setbox\proofbelow=\hbox{}\setbox\proofrulename=\hbox{}%
\let\justifies\proofover\let\leadsto\proofoverdots\let\Justifies\proofoverdbl
\let\using\proofusing\let\[\prooftree
\ifinsideprooftree\let\]\endprooftree\fi
\proofdotsfalse\doubleprooffalse
\let\thickness\setproofrulebreadth
\let\shiftright\shiftproofbelow \let\shift\shiftproofbelow
\let\shiftleft\shiftproofbelowneg
\let\ifwasinsideprooftree\ifinsideprooftree
\insideprooftreetrue
\setbox\proofabove=\hbox\bgroup$\displaystyle 
\let\wereinproofbit\prooftree
\shortenproofleft=0pt \shortenproofright=0pt \proofbelowshift=0pt
\onleftofproofruletrue\penalty1
}

\def\eproofbit{
\ifx    \wereinproofbit\prooftree
\then   \ifcase \lastpenalty
        \then   \shortenproofright=0pt  
        \or     \unpenalty\hfil         
        \or     \unpenalty\unskip       
        \else   \shortenproofright=0pt  
        \fi
\fi
\global\dimen0=\shortenproofleft
\global\dimen1=\shortenproofright
\global\dimen2=\proofrulebreadth
\global\dimen3=\proofbelowshift
\global\dimen4=\proofdotseparation
\global\count255=\proofdotnumber
$\egroup  
\shortenproofleft=\dimen0
\shortenproofright=\dimen1
\proofrulebreadth=\dimen2
\proofbelowshift=\dimen3
\proofdotseparation=\dimen4
\proofdotnumber=\count255
}

\def\proofover{
\eproofbit 
\setbox\proofbelow=\hbox\bgroup 
\let\wereinproofbit\proofover
$\displaystyle
}%
\def\proofoverdbl{
\eproofbit 
\doubleprooftrue
\setbox\proofbelow=\hbox\bgroup 
\let\wereinproofbit\proofoverdbl
$\displaystyle
}%
\def\proofoverdots{
\eproofbit 
\proofdotstrue
\setbox\proofbelow=\hbox\bgroup 
\let\wereinproofbit\proofoverdots
$\displaystyle
}%
\def\proofusing{
\eproofbit 
\setbox\proofrulename=\hbox\bgroup 
\let\wereinproofbit\proofusing
\kern0.3em$
}

\def\endprooftree{
\eproofbit 
  \dimen5 =0pt
\dimen0=\wd\proofabove \advance\dimen0-\shortenproofleft
\advance\dimen0-\shortenproofright
\dimen1=.5\dimen0 \advance\dimen1-.5\wd\proofbelow
\dimen4=\dimen1
\advance\dimen1\proofbelowshift \advance\dimen4-\proofbelowshift
\ifdim  \dimen1<0pt
\then   \advance\shortenproofleft\dimen1
        \advance\dimen0-\dimen1
        \dimen1=0pt
        \ifdim  \shortenproofleft<0pt
        \then   \setbox\proofabove=\hbox{%
                        \kern-\shortenproofleft\unhbox\proofabove}%
                \shortenproofleft=0pt
        \fi
\fi
\ifdim  \dimen4<0pt
\then   \advance\shortenproofright\dimen4
        \advance\dimen0-\dimen4
        \dimen4=0pt
\fi
\ifdim  \shortenproofright<\wd\proofrulename
\then   \shortenproofright=\wd\proofrulename
\fi
\dimen2=\shortenproofleft \advance\dimen2 by\dimen1
\dimen3=\shortenproofright\advance\dimen3 by\dimen4
\ifproofdots
\then
        \dimen6=\shortenproofleft \advance\dimen6 .5\dimen0
        \setbox1=\vbox to\proofdotseparation{\vss\hbox{$\cdot$}\vss}%
        \setbox0=\hbox{%
                \advance\dimen6-.5\wd1
                \kern\dimen6
                $\vcenter to\proofdotnumber\proofdotseparation
                        {\leaders\box1\vfill}$%
                \unhbox\proofrulename}%
\else   \dimen6=\fontdimen22\the\textfont2 
        \dimen7=\dimen6
        \advance\dimen6by.5\proofrulebreadth
        \advance\dimen7by-.5\proofrulebreadth
        \setbox0=\hbox{%
                \kern\shortenproofleft
                \ifdoubleproof
                \then   \hbox to\dimen0{%
                        $\mathsurround0pt\mathord=\mkern-6mu%
                        \cleaders\hbox{$\mkern-2mu=\mkern-2mu$}\hfill
                        \mkern-6mu\mathord=$}%
                \else   \vrule height\dimen6 depth-\dimen7 width\dimen0
                \fi
                \unhbox\proofrulename}%
        \ht0=\dimen6 \dp0=-\dimen7
\fi
\let\doll\relax
\ifwasinsideprooftree
\then   \let\VBOX\vbox
\else   \ifmmode\else$\let\doll=$\fi
        \let\VBOX\vcenter
\fi
\VBOX   {\baselineskip\proofrulebaseline \lineskip.2ex
        \expandafter\lineskiplimit\ifproofdots0ex\else-0.6ex\fi
        \hbox   spread\dimen5   {\hfi\unhbox\proofabove\hfi}%
        \hbox{\box0}%
        \hbox   {\kern\dimen2 \box\proofbelow}}\doll%
\global\dimen2=\dimen2
\global\dimen3=\dimen3
\egroup 
\ifonleftofproofrule
\then   \shortenproofleft=\dimen2
\fi
\shortenproofright=\dimen3
\onleftofproofrulefalse
\ifinsideprooftree
\then   \hskip.5em plus 1fil \penalty2
\fi
}
